
\documentclass[conference,letterpaper]{IEEEtran}

\addtolength{\topmargin}{9mm}

%
%
\usepackage[utf8]{inputenc} 
\usepackage[T1]{fontenc}
\usepackage{url}
\usepackage{ifthen}
\usepackage{cite}
\usepackage[cmex10]{amsmath} 


\interdisplaylinepenalty=2500 

\usepackage{bm,bbm,amssymb,mathrsfs,amsthm,mathtools,enumitem}

\usepackage{xcolor}

\newtheorem{theorem}{Theorem}

\theoremstyle{definition}
\newtheorem{definition}{Definition}

\newtheorem{lemma}{Lemma}
\newtheorem{corollary}{Corollary}

\newtheorem{assumption}{Assumption}

\newtheorem{remark}{Remark}

\newcommand{\R}{{\mathbb R}}

\newcommand{\E}{{\mathbb E}}

\newcommand{\Py}{{\mathbb P}}

\newcommand{\vct}[1]{{#1}}

\newcommand{\rv}[1]{{{#1}}}

\DeclareMathOperator*{\argmin}{arg\,min}




\hyphenation{op-tical net-works semi-conduc-tor}

\begin{document}
\title{Testing for Anomalies: Active Strategies and Non-asymptotic Analysis} 



\author{%
  \IEEEauthorblockN{Dhruva Kartik, Ashutosh Nayyar and Urbashi Mitra}
  \IEEEauthorblockA{Ming Hsieh Department of Electrical and Computer Engineering\\
                    University of Southern California, Los Angeles, CA, USA\\ 
                    Email: \{mokhasun, ashutosh.nayyar, ubli\}@usc.edu}
}
\thanks{asas}

\maketitle

\begin{abstract}
The problem of verifying whether a multi-component system has anomalies or not is addressed. Each component can be probed over time in a data-driven manner to obtain noisy observations that indicate whether the selected component is anomalous or not. The aim is to minimize the probability of incorrectly declaring the system to be free of anomalies while ensuring that the probability of correctly declaring it to be safe is sufficiently large. This problem is modeled as an active hypothesis testing problem in the Neyman-Pearson setting. Component-selection and inference strategies are designed and analyzed in the non-asymptotic regime. For a specific class of homogeneous problems, stronger (with respect to prior work) non-asymptotic converse and achievability bounds are provided.
\end{abstract}


\section{Introduction}
Consider a system with multiple components. Each of these components may be anomalous and we would like to test whether the system has anomalies or not. When the system does not have anomalies, we say that the system is \emph{safe}, and we say that the system is \emph{unsafe} otherwise. Each component can be probed to receive a noisy observation that indicates whether or not the component is anomalous. The components can be probed sequentially in a data-driven manner. The goal is to judiciously probe the components and reliably infer whether the system is safe or unsafe.

 Such controlled-sensing problems can be modeled and analyzed in the framework of \emph{active hypothesis testing} \cite{chernoff1959sequential,nitinawarat2013controlled,naghshvar2013active}. In this paper, we consider a setting in which we are allowed to probe the components for a fixed number of times. Thus, the time-horizon of the sensing process is fixed. We are interested in two types of probabilities: correct-verification probability and incorrect-verification probability. The correct-verification event denotes that the agent inferred that the system was safe when it was indeed safe, and the incorrect-verification event denotes that the agent inferred that the system was safe when it had an anomaly. We aim to design a sensor-selection strategy and an inference strategy for the agent that minimizes the incorrect-verification probability while ensuring that the correct-verification probability is sufficiently high. For simplicity, we assume that at most one component can be anomalous in the system. We make a distinction between \emph{homogeneous} and \emph{heterogeneous} systems. In homogeneous systems, when a component is probed, the statistics of the observations depend only whether or not the component is anomalous and \emph{not} on the component's index. This may not be true in heterogeneous systems.
 
There is a plethora of works on active hypothesis testing and anomaly detection problems. These works focus largely on the asymptotic aspects whereas our goal in this paper is to focus on the non-asymptotic aspects. More specifically, we would like to generalize the strong finite-block length bounds for Neyman-Pearson type hypothesis testing in \cite{polyathesis} to a general active hypothesis testing setting. We believe this is a step forward in that direction.

Our main contributions in this paper can be summarized as follows:
 We construct a fixed-horizon Neyman-Pearson type formulation for active anomaly verification.
 For this model, we derive asymptotically optimal error rates.
 For a homogeneous system, we derive strong non-asymptotic converse bounds.
 We construct deterministic strategies that achieve this bound in a strong sense (up to an additive logarithmic term) in the non-asymptotic regime and thus, can be considered to be second-order optimal.
 Classical approaches \cite{chernoff1959sequential,nitinawarat2013controlled,naghshvar2013active} suggest an open-loop randomized component selection strategy which is asymptotically optimal. However, this strategy is not efficient in the non-asymptotic regime. 
 Our analysis herein sheds light on how one might construct second-order optimal strategies for more general active hypothesis testing problems.

Generally, we are not only interested in determining whether the system is safe or unsafe. We are also interested in finding which component is anomalous. The latter problem is known as \emph{anomaly detection}. However, we focus on the former problem for simplicity. We would like to note that the detection problem can be modeled using the symmetric formulation (P2) in \cite{kartikarxiv}, and the bounds obtained herein can then be used in that symmetric formulation.

\subsection{Related Work}\label{priorwork}
Anomaly detection and verification problems are generally analyzed within the framework of active hypothesis testing. We will first provide a brief overview of active hypothesis testing.
Hypothesis testing is a long-standing problem and has been addressed in various settings. In the simplest fixed-horizon hypothesis testing setup, we have binary hypotheses and a single experiment. The inference is made based on a fixed number of i.i.d. observations obtained by repeatedly performing this experiment. In this setup, a popular formulation is the Neyman-Pearson formulation \cite{cover2012elements}. For this formulation, tight bounds on error probabilities were proved in \cite{polyathesis}. However, not much work \cite{polyanskiy2011binary} has been done to extend these bounds to the case of \emph{active} hypothesis testing. Neyman-Pearson type active hypothesis testing problems were formulated in \cite{kartik2019active,kartikarxiv} and their asymptotics were analyzed. In this paper, we take the non-asymptotic analysis of such problems a step further.

Another widely used paradigm is the sequential setting. In sequential hypothesis testing, the time horizon is not fixed and the agent can continue to perform experiments until a stopping criterion is met. The objective then is to minimize a linear combination of the expected stopping time and the Bayesian error probability.
Inspired by Wald's sequential probability ratio test (SPRT) \cite{wald1973sequential}, Chernoff first addressed the problem of \emph{active} sequential hypothesis testing in \cite{chernoff1959sequential}. This work was later generalized in \cite{bessler1960theory,nitinawarat2013controlled,naghshvar2013active}. Although our formulation has a fixed time-horizon, it is closely related to the sequential active hypothesis testing framework. Fixed-horizon formulations are useful in applications with hard time/energy constraints and the agent does not have the luxury to keep performing experiments until strong enough evidence is obtained. In contrast to all these approaches, a Gibbs sampling-based active sensing approach was also proposed in \cite{arpan}.

Some recent works in anomaly detection include \cite{huang2018active,chen2019active,tsope,gure,chao}.
All these works are in the sequential setting and focus on asymptotic optimality. Unlike these works, we address a simpler problem in which we do not need to find the anomaly but only need to decide whether or not there is an anomaly. However, we believe our non-asymptotic analysis is considerably tighter than any of these works.

\subsection{Notation}\label{notation}
Random variables are denoted by upper case letters ($X$), their realization by the corresponding lower case letter ($x$). We use calligraphic fonts to denote sets ($\mathcal{U}$). The probability simplex over a finite set $\mathcal{U}$ is denoted by $\Delta \mathcal{U}$. In general, subscripts denote time indices unless stated otherwise. For time indices $n_1\leq n_2$, $\rv{Y}_{n_1:n_2}$ denotes the collection of variables $(\rv{Y}_{n_1},\rv{Y}_{n_1+1},...,\rv{Y}_{n_2})$.
For a strategy $g$, we use $\Py^g[\cdot]$ and $\E^g[\cdot]$ to indicate that the probability and expectation depend on the choice of $g$. For an hypothesis $i$, $\E_i^g[\cdot]$ denotes the expectation conditioned on hypothesis $i$.
The \emph{cross-entropy} between two distributions $p$ and $q$ over a finite space $\mathcal{Y}$ is given by
\begin{align}
H(p,q) = -\sum_{y \in \mathcal{Y}}p(y)\log q(y).
\end{align}
The \emph{Kullback-Leibler} divergence between distributions $p$ and $q$ is given by
\begin{equation}
D(p || q) = \sum_{y \in \mathcal{Y}}p(y)\log\frac{p(y)}{q(y)}.
\end{equation}

\section{Problem Formulation}
Consider a system with multiple components. Let the set of all the components in the system be denoted by $\mathcal{U} \doteq \{1,\dots,M\}$ where $M$ is a positive integer. For simplicity, we assume that the system may have at most one anomalous component. Thus the set of hypotheses is $\mathcal{X} \doteq \{0,1,\dots, M \}$, where 0 denotes that the system does not have any anomaly whereas $j > 0$ denotes that component $j$ is anomalous. Let the random variable $X$ represent the true hypothesis which is unknown to the agent. Let the prior distribution on $X$ be $\rho_1$.

At each time $n$, the agent can select a component $U_n \in \mathcal{U}$ and obtain an observation $Y_n \in \mathcal{Y}$. This action of selecting a component and obtaining an observation will be referred to as an \emph{experiment}. The observation $Y_n$ at time $n$ is given by
\begin{equation}
Y_n = \xi(X, U_n, W_n) \doteq
\begin{cases}
\Upsilon(U_n, W_n) & \text{if } X \neq U_n\\
\bar{\Upsilon}(U_n, W_n) & \text{if } X = U_n,
\end{cases}
\end{equation}
where $\{W_n: n =1,2,\dots\}$ is a collection of mutually independent variables and $\Upsilon, \bar{\Upsilon}$ are arbitrary measurable mappings. The density associated with an observation $y$ when component $u$ is selected is denoted by $p_1^u$ if component $u$ is anomalous and $p_0^u$ otherwise. For each component $u$, the densities are with respect to a $\sigma$-finite measure $\nu$ over the observation space $\mathcal{Y}$. Thus, for a measurable set $A \subseteq \mathcal{Y}$ and and hypothesis $j \in \mathcal{X}$,
\begin{align*}
\Py[Y_n \in A \mid U_n = u, X = j] = \begin{cases}
\int_{A} p_0^u(y)d\nu(y) & \text{if } j \neq u\\
\int_{A} p_1^u(y)d\nu(y) & \text{if } j = u.
\end{cases} 
\end{align*}
The system is said to be \emph{homogeneous} if the densities $p_0^u$ and $p_1^u$ do not depend on the component $u$. Otherwise, the system is said to be \emph{heterogeneous}. Thus, in homogeneous systems, the statistics of the observation $Y$ depend only on whether the selected component is anomalous or not. For homogeneous systems, we will drop the superscript $u$ from the densities and simply refer to them as $p_0$ and $p_1$.

The total number of observations collected by the agent is fixed and is denoted by $N$. At time $n=1,2,\ldots, N$, the information available to the agent, denoted by $\rv{I}_n$, is the collection of all experiments performed and the corresponding observations up to time $n-1$, 
\begin{equation}
\rv{I}_n \doteq \{\rv{U}_{1:n-1},\rv{Y}_{1:n-1}\}.
\end{equation}
Let the collection of all possible realizations of information $\rv{I}_n$ be denoted by $\mathcal{I}_n$. At time $n$, the agent selects a distribution over the set of components $\mathcal{U}$ according to an \emph{experiment selection rule} $g_n: \mathcal{I}_n \to \Delta \mathcal{U}$ and the action $\rv{U}_n$ is randomly drawn from the distribution $g_n(\rv{I}_n)$,
that is,
\begin{equation}
\rv{U}_n  \sim g_n(\rv{I}_n).
\end{equation}
{For a given experiment $u \in \mathcal{U}$ and information realization $\mathscr{I} \in \mathcal{I}_n$, the probability $\Py^g[U_n = u \mid I_n = \mathscr{I}]$ is denoted by $g_n(\mathscr{I} : u)$.}
The sequence  $\{g_n, n=1,\ldots,N\}$ is denoted by $g$ and  referred to as the \emph{experiment selection strategy}. Let the collection of all such experiment selection strategies be $\mathcal{G}$. 

If the system does not have any anomalies, it is considered to be safe (denoted by $0$), and if the system has an anomaly, it is considered to by unsafe (denoted by $\aleph$). After performing $N$ experiments, the agent can declare the system to be safe or unsafe. We refer to this final declaration as the  agent's \emph{inference decision} and denote it by  $\hat{\rv{X}}_N$.  Thus, the inference decision can take values in $\{0,\aleph\}$. Using the information $I_{N+1}$, the agent chooses a distribution over the set $\{0,\aleph\}$ according to an \emph{inference strategy} $f: \mathcal{I}_{N+1} \to \Delta\{0,\aleph\}$ and the inference $\hat{\rv{X}}_N$ is drawn from the distribution $f(\rv{I}_{N+1})$,  \emph{i.e}.
\begin{equation}
    \hat{\rv{X}}_{N} \sim f(\rv{I}_{N+1}).
\end{equation}
{For a given inference $\hat{x} \in \{0,\aleph\}$ and information realization $\mathscr{I} \in \mathcal{I}_{N+1}$, the probability $\Py^{f,g}[\hat{X}_N = \hat{x} \mid I_{N+1} = \mathscr{I}]$ is denoted by $f_N(\mathscr{I} : \hat{x})$.} Let the set of all inference strategies be $\mathcal{F}$.

The system is said to be \emph{correctly verified} if the agent declares the system to be safe when it was indeed safe, and it is said to be \emph{incorrectly verified} if the agent declares it to be safe when it was actually \emph{not} safe. For an experiment selection strategy $g$ and an inference strategy $f$, we define the following probabilities.
\begin{definition}
Let $\psi_N$ be the probability that the agent declares the system to be safe given that the system is indeed safe, \emph{i.e}.
\begin{align}
    \psi_N &\doteq \Py^{f,g}[\hat{\rv{X}}_{N} = 0 \mid \rv{X} = 0].\\
    \shortintertext{We refer to $\psi_N$ as the \emph{correct-verification probability}. Let $\phi_N$ be the probability that the agent declares the system to be safe given that the system is not safe, \emph{i.e}.}
    \phi_N &\doteq \Py^{f,g}[\hat{\rv{X}}_{N} = 0 \mid \rv{X} \neq 0].
\end{align}
We refer to $\phi_N$ as the \emph{incorrect-verification probability}.
\end{definition}

We are interested in designing an experiment selection strategy $g$ and an inference strategy $f$ that minimize the incorrect-verification probability $\phi_N$ subject to the constraint that the correct-verification probability $\psi_N$ is sufficiently large.
In other words, we would like to solve the following optimization problem: 
\begin{align}\tag{P1}\label{opti}
    & & & \underset{f \in \mathcal{F},g \in \mathcal{G}}{\text{inf}} & & \phi_N & \\
    \nonumber & & & \text{subject to} & & \psi_N \geq 1-\epsilon_N, & 
\end{align}
where $0 < \epsilon_N < 1$. Let the infimum value of this optimization problem be $\phi^*_N$. 
Note that this problem is always feasible because the agent can trivially satisfy the correct-verification probability constraint by always declaring the system to be safe. We refer to this problem as Problem (\ref{opti}).

\section{Main Results}
Before we state our main results, we will define some important quantities.

\begin{definition}
For an experiment $u \in \mathcal{U}$, a component $j \in \mathcal{U}$ and for an observation $y \in \mathcal{Y}$, the log-likelihood ratio (hypothesis 0 vs hypothesis $j$) is denoted by
\begin{align}\label{llrdef}
\lambda_j(u,y) \doteq \begin{cases}
\log\frac{p_0^u(y)}{p_1^u(y)} & \text{if } u = j \\
0 & \text{otherwise}.
\end{cases}
\end{align}
Also, define $D_j^u = \E_0^u[\lambda_j(u,Y)]$ where observation $Y\sim p^u_0$. To avoid trivial cases, let $D_u^u > 0$ for every $u \in \mathcal{U}$.
\end{definition}

We make the following assumption on the log-likelihood ratios which is standard in the hypothesis testing literature \cite{chernoff1959sequential,nitinawarat2013controlled}.
\begin{assumption}\label{boundedassump}
For any given experiment $u \in \mathcal{S}$ and each $k \in \{0,1\}$,
\begin{align}
\int_y \left( \log\frac{p_0^u(y)}{p_1^u(y)} \right)^2 p_k^u(y)d\nu(y) < \infty.
\end{align}
\end{assumption}

\begin{definition}[Max-min Kullback-Leibler Divergence]
\label{kldef}
Define
\begin{align}
\label{maxmin}D^* &\doteq \max_{\vct{\alpha} \in \Delta\mathcal{U}} \min_{j \in \mathcal{U}} \sum_{u \in \mathcal{U}}\alpha(u) D_j^u\\
\label{minmax}&=  \min_{\vct{\beta} \in \Delta\mathcal{U}} \max_{u \in \mathcal{U}}\sum_{j \in \mathcal{U}} \beta(j) D_j^u.
\end{align}
Note that $\alpha$ and $\beta$ are distributions over the set of experiments/components $\mathcal{U}$. Let $\alpha^*$ be the distribution that achieves the maximum in \eqref{maxmin} and let $\beta^*$ be the distribution that achieves the minimum in \eqref{minmax}. The equality of the min-max and max-min values follows from the minimax theorem \cite{osborne1994course} because the set $\mathcal{U}$ is finite and the Kullback-Leibler divergences are bounded due to Assumption \ref{boundedassump}.
\end{definition}

\begin{lemma}\label{kldexp}
The max-min Kullback-Leibler Divergence is equal to
\begin{align}
D^* = \left( \sum_{u \in \mathcal{U}} \frac{1}{D_u^u}\right)^{-1}.
\end{align}
And the distributions $\alpha^*$ and $\beta^*$ are given by $\alpha^*(u)= \beta^*(u) = D^*/D_u^u.$
\end{lemma}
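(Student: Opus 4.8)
\emph{Proof proposal.}
The plan is to first exploit the very special structure of the array $[D_j^u]_{j,u\in\mathcal{U}}$ and then verify a guessed saddle point. From the definition \eqref{llrdef}, $\lambda_j(u,\cdot)\equiv 0$ whenever $u\neq j$, so $D_j^u=\E_0^u[\lambda_j(u,Y)]=0$ for $u\neq j$, whereas $D_u^u=\E_0^u\!\left[\log\frac{p_0^u(Y)}{p_1^u(Y)}\right]=D(p_0^u\,\|\,p_1^u)>0$ by the standing assumption $D_u^u>0$. Hence $[D_j^u]$ is diagonal, and for every $\alpha\in\Delta\mathcal{U}$ and $j\in\mathcal{U}$ we have $\sum_{u\in\mathcal{U}}\alpha(u)D_j^u=\alpha(j)D_j^j$, while for every $\beta\in\Delta\mathcal{U}$ and $u\in\mathcal{U}$ we have $\sum_{j\in\mathcal{U}}\beta(j)D_j^u=\beta(u)D_u^u$. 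Therefore \eqref{maxmin} and \eqref{minmax} collapse to
\begin{equation}
D^* = \max_{\alpha\in\Delta\mathcal{U}}\ \min_{j\in\mathcal{U}}\ \alpha(j)D_j^j \;=\; \min_{\beta\in\Delta\mathcal{U}}\ \max_{u\in\mathcal{U}}\ \beta(u)D_u^u ,
\end{equation}
the two sides being equal by Definition \ref{kldef}.

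Next I would guess the optimizers by equalizing the terms $\alpha(j)D_j^j$. Put $c\doteq\left(\sum_{u\in\mathcal{U}}1/D_u^u\right)^{-1}$ and define $\alpha^\circ(u)=\beta^\circ(u)\doteq c/D_u^u$; since every $D_u^u>0$ and $\sum_{u}c/D_u^u=1$, these are genuine probability distributions on $\mathcal{U}$. Substituting $\alpha^\circ$ into the max--min form gives $\min_{j}\alpha^\circ(j)D_j^j=\min_{j}c=c$, so $D^*\geq c$; substituting $\beta^\circ$ into the min--max form gives $\max_{u}\beta^\circ(u)D_u^u=\max_{u}c=c$, so $D^*\leq c$. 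Combining the two, $D^*=c=\left(\sum_{u\in\mathcal{U}}1/D_u^u\right)^{-1}$, and because both bounds are attained, $\alpha^\circ$ achieves the maximum in \eqref{maxmin} and $\beta^\circ$ achieves the minimum in \eqref{minmax}.

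Finally, to make the statement ``$\alpha^*$ and $\beta^*$ are given by\,\ldots'' unambiguous I would record uniqueness: if $\alpha'$ also attains the maximum, then $\alpha'(j)D_j^j\geq c$, i.e. $\alpha'(j)\geq c/D_j^j$, for every $j$, and summing over $j$ forces $1\geq c\sum_{j}1/D_j^j=1$ with equality, whence $\alpha'=\alpha^\circ$; symmetrically, any $\beta'$ attaining the minimum satisfies $\beta'(u)\leq c/D_u^u$ for every $u$, and the same summation forces $\beta'=\beta^\circ$. I do not anticipate any real obstacle: the only step that needs a moment's thought is the first observation --- that the log-likelihood-ratio convention in \eqref{llrdef} makes $[D_j^u]$ diagonal --- after which the result follows by a short verification argument, with no appeal to the minimax theorem beyond the one already invoked in Definition \ref{kldef}.
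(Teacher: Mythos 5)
Your proof is correct and follows essentially the same route as the paper: both reduce the objective to $\alpha(j)D_j^j$ using the fact that $D_j^u=0$ for $u\neq j$, and both identify the equalizing distribution $\alpha^*(u)=D^*/D_u^u$ as the optimizer. Your primal--dual sandwich (substituting the guessed distribution into both the max--min and min--max forms) is a cleaner and more explicit verification than the paper's appeal to ``a simple contradiction argument,'' and your uniqueness argument is a harmless bonus not present in the paper.
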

\begin{proof}
See Appendix \ref{kldexpproof}.
\end{proof}

\begin{definition}\label{conflevel}
For given instance of information $\iota_{N+1} = \{u_{1:N},y_{1:N}\}$ of $I_{N+1}$, the we define \emph{confidence level} as
\begin{align}
\mathcal{C}(\iota_{N+1},\rho_1) \doteq \log \frac{\prod_{n = 1}^N p^{u_n}_0(y_n)}{\sum_{j \in \mathcal{U}}\tilde{\rho}_1(j) \prod_{n = 1}^N p^{u_n}_{\mathbbm{1}(u_n=j)}(y_n)},
\end{align}
where $\tilde{\rho}_1(j) = \rho_1(j)/(1-\rho_1(0)).$
\end{definition}
We will discuss more properties of this confidence level in Section \ref{analysis}. Note that this definition of confidence is consistent with the one in \cite{kartikarxiv}.

\subsection{Asymptotic Results for Heterogeneous Systems}
In this section, we state some results on the asymptotic behavior of the optimal value $\phi_N^*$ in Problem \eqref{opti}. The results stated here have been proven in \cite{kartikarxiv} in a setting where the observation space $\mathcal{Y}$ is finite. These results can be extended to infinite observation spaces in a fairly straightforward manner and thus, we omit their proofs.
\begin{lemma}[Weak Converse]\label{cslb}
The optimum value $\phi_N^*$ in Problem \eqref{opti} satisfies
\begin{align}
-\frac{1}{N}\log\phi^*_N 
 &\leq \frac{D^*}{1-\epsilon_N} + \frac{\log 2 + H(\beta^{*},\tilde{\rho}_1)}{N(1-\epsilon_N)}.\label{supbound}
\end{align}
\end{lemma}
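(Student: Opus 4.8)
The plan is to run a data-processing (weak-converse) argument after collapsing all unsafe hypotheses into a single mixture hypothesis. First I would fix an arbitrary feasible pair $(f,g)$ and let $\Py_i$ denote the distribution it induces on the trajectory $I_{N+1}=(U_{1:N},Y_{1:N})$ when $X=i$. I would then introduce the mixture measure $\bar{\Py}\doteq\sum_{j\in\U}\tilde{\rho}_1(j)\,\Py_j$, for which $\Py_0[\hat{X}_N=0]=\psi_N$ and $\bar{\Py}[\hat{X}_N=0]=\phi_N$. Since $\hat{X}_N$ is a (randomized) function of $I_{N+1}$, the data-processing inequality for Kullback--Leibler divergence gives $D(\Py_0\,\|\,\bar{\Py})\ge d(\psi_N\,\|\,\phi_N)$, where $d(\cdot\|\cdot)$ is the binary relative entropy; lower-bounding this by $\psi_N\log(1/\phi_N)-\log 2\ge(1-\epsilon_N)\log(1/\phi_N)-\log 2$ (using $\psi_N\ge 1-\epsilon_N$ and $\phi_N\le 1$, the latter ensuring $\log(1/\phi_N)\ge 0$) reduces everything to an upper bound on $D(\Py_0\,\|\,\bar{\Py})$.

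The second step is to recognize $D(\Py_0\,\|\,\bar{\Py})$ as the \emph{expected confidence level}. Because the agent never observes $X$, the experiment-selection factors $g_n(\iota_n:u_n)$ are identical under $\Py_0$, under each $\Py_j$, and hence under $\bar{\Py}$, so they cancel inside the log-likelihood ratio; what survives is precisely the ratio in Definition~\ref{conflevel}, giving $D(\Py_0\,\|\,\bar{\Py})=\E_0^g[\C(I_{N+1},\rho_1)]$. So the crux becomes showing $\E_0^g[\C(I_{N+1},\rho_1)]\le ND^*+H(\beta^*,\tilde{\rho}_1)$. Writing $\Lambda_j=\sum_{n=1}^{N}\lambda_j(U_n,Y_n)$, I would use the identity $\C(I_{N+1},\rho_1)=-\log\sum_{j\in\U}\tilde{\rho}_1(j)e^{-\Lambda_j}$, reweight the sum by $\beta^*$ (writing $\tilde{\rho}_1(j)=\beta^*(j)\cdot\tilde{\rho}_1(j)/\beta^*(j)$), and apply Jensen's inequality to $x\mapsto e^{x}$ to obtain $\C(I_{N+1},\rho_1)\le H(\beta^*,\tilde{\rho}_1)+\sum_{j\in\U}\beta^*(j)\Lambda_j$ (the entropy of $\beta^*$, which would only improve the bound, is discarded). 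Taking expectation under hypothesis $0$ and using the tower property together with $\E_0[\lambda_j(U_n,Y_n)\mid I_n,U_n]=D_j^{U_n}$ yields $\E_0^g[\Lambda_j]=N\sum_{u\in\U}\bar{\alpha}(u)D_j^u$, where $\bar{\alpha}(u)\doteq\tfrac1N\sum_{n=1}^{N}\Py_0^g[U_n=u]\in\Delta\U$ is the time-averaged selection frequency under hypothesis $0$; then $\sum_j\beta^*(j)\sum_u\bar{\alpha}(u)D_j^u=\sum_u\bar{\alpha}(u)\sum_j\beta^*(j)D_j^u\le\max_u\sum_j\beta^*(j)D_j^u=D^*$ by the min--max expression \eqref{minmax}, which is exactly where the defining property of $\beta^*$ enters.

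Chaining the three estimates gives $(1-\epsilon_N)\log(1/\phi_N)-\log 2\le ND^*+H(\beta^*,\tilde{\rho}_1)$; I would then rearrange, divide by $N$, and take the infimum over feasible $(f,g)$ to obtain \eqref{supbound}. I expect the main obstacle to be the second step: correctly identifying $D(\Py_0\,\|\,\bar{\Py})$ with $\E_0^g[\C(I_{N+1},\rho_1)]$ (the cancellation of the selection factors is the mechanism that makes an active strategy behave, \emph{for this particular bound}, like an open-loop one) and then bounding that expectation while giving up at most an additive $H(\beta^*,\tilde{\rho}_1)$ — the reweighted Jensen step is what produces this cross-entropy term, and invoking the \emph{min--max} (rather than max--min) form of $D^*$ is what collapses the average over $\bar{\alpha}$ into the single constant $D^*$. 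One minor technicality deserves a remark: if $\mathrm{supp}(\beta^*)\not\subseteq\mathrm{supp}(\tilde{\rho}_1)$ then $H(\beta^*,\tilde{\rho}_1)=+\infty$ and \eqref{supbound} is vacuous, so one may assume without loss of generality that $\tilde{\rho}_1$ has full support on $\U$, in which case the reweighting above is legitimate (and Assumption~\ref{boundedassump} guarantees $\E_0^g[\C]<\infty$, which a posteriori forces $\phi_N>0$).
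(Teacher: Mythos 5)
Your proof is correct, and it follows essentially the same route the paper sets up in Section~\ref{analysis} (and defers to \cite{kartikarxiv}): reduce to a one-shot binary test between $P_N^g$ and the mixture $Q_N^g$ so that the confidence level $\mathcal{C}(I_{N+1},\rho_1)$ is the log-likelihood ratio, apply the weak converse / data-processing bound, and control $\E_0^g[\mathcal{C}]$ via the $\beta^*$-reweighted Jensen step --- which is exactly the decomposition of Lemma~\ref{decom} plus non-negativity of the KL term --- together with the min--max characterization \eqref{minmax} of $D^*$. Your support caveat on $\tilde{\rho}_1$ is a fair and correctly handled technicality.
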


\begin{assumption}\label{epsassum}
We have that the bound $1-\epsilon_N$ on the correct-verification probability in \eqref{opti} {satisfies $\epsilon_N \to 0$}. Further,
\begin{align}
    \lim_{N \to \infty}\frac{-\log{\epsilon_N}}{N} = 0.
\end{align}
\end{assumption}

\begin{assumption}\label{subgauss}
There exists a constant $B >0$ such that for each experiment $u$ and component $j$, the log-likelihood ratios $|\lambda_j(u,Y)| < B$.
\end{assumption}

Under Assumptions \ref{epsassum} and \ref{subgauss}, we can prove that for each $N$, there exist inference and experiment selections strategies $f^N, g^N$ such that they satisfy the constraint on correct-verification probability in Problem \ref{opti} and the corresponding incorrect-verification probability $\phi_N$ decays exponentially with $N$ at rate $D^*$. This achievability argument leads us to the following theorem.
\begin{theorem}
We have
\begin{align}
\lim_{N \to \infty}-\frac{1}{N}\log\phi_N^* = D^*.
\end{align}

\end{theorem}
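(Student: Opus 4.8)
\emph{Proof proposal.} The plan is to sandwich $-\frac1N\log\phi_N^*$ between the converse bound of Lemma~\ref{cslb} and a matching achievability bound, both of which tend to $D^*$. The upper side is immediate: taking $N\to\infty$ in \eqref{supbound} and using Assumption~\ref{epsassum} (so $\epsilon_N\to0$ and hence $1/(1-\epsilon_N)\to1$), while $\log 2 + H(\beta^*,\tilde\rho_1)$ is a fixed constant independent of $N$ (finite since $\beta^*(j)=D^*/D_j^j>0$ and $\rho_1$ is assumed to charge every component), the right-hand side of \eqref{supbound} converges to $D^*$, so $\limsup_N -\frac1N\log\phi_N^*\le D^*$.

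The substance is the achievability bound $\liminf_N -\frac1N\log\phi_N^*\ge D^*$. It suffices, for each $N$, to exhibit a pair $(f^N,g^N)$ that is feasible for \eqref{opti} and has $\phi_N\le e^{-N(D^*-o(1))}$. I would take $g^N$ to be the open-loop rule that draws each $U_n$ independently from $\alpha^*$, the maximizer in \eqref{maxmin}, and $f^N$ the threshold rule that declares the system safe iff $\mathcal{C}(I_{N+1},\rho_1)\ge\tau_N$, with $\tau_N=N(D^*-\delta_N)$ and $\delta_N\downarrow0$ chosen below. Setting $L_j\doteq\sum_{n=1}^N\lambda_j(U_n,Y_n)$, the analysis rests on two facts about Definition~\ref{conflevel}: (i) $e^{-\mathcal{C}(I_{N+1},\rho_1)}=\sum_{j\in\mathcal{U}}\tilde\rho_1(j)e^{-L_j}$, which gives both the pointwise bound $\mathcal{C}(I_{N+1},\rho_1)\ge\min_j L_j$ and the bound $\tilde\rho_1(j)e^{-L_j}\le e^{-\mathcal{C}(I_{N+1},\rho_1)}$ for every $j$; and (ii) since the open-loop $g^N$ does not depend on $X$, $e^{-L_j}$ is exactly the likelihood ratio $d\Py_j/d\Py_0$ on the information $I_{N+1}$.

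For feasibility, under $X=0$ the pairs $(U_n,Y_n)$ are i.i.d.\ with $Y_n\sim p_0^{U_n}$, so by Lemma~\ref{kldexp} each summand of $L_j$ has mean $\alpha^*(j)D_j^j=D^*$, and $|\lambda_j(U_n,Y_n)|\le B$ by Assumption~\ref{subgauss}; Hoeffding's inequality then gives $\Py_0[L_j<N(D^*-\delta_N)]\le e^{-N\delta_N^2/(2B^2)}$, and combining with fact (i) and a union bound over the $M$ components yields $\psi_N\ge 1-Me^{-N\delta_N^2/(2B^2)}$. Choosing $\delta_N=B\sqrt{2(\log M-\log\epsilon_N)/N}$ makes this $\ge1-\epsilon_N$, and Assumption~\ref{epsassum} forces $\delta_N\to0$. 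For the error probability, facts (i)--(ii) give $d\Py_j/d\Py_0\le e^{-\tau_N}/\tilde\rho_1(j)$ on the event $\{\mathcal{C}(I_{N+1},\rho_1)\ge\tau_N\}$, so a change of measure gives $\Py_j[\hat{X}_N=0]\le e^{-\tau_N}/\tilde\rho_1(j)$, hence $\phi_N=\sum_{j\in\mathcal{U}}\tilde\rho_1(j)\Py_j[\hat{X}_N=0]\le Me^{-\tau_N}$. Therefore $-\frac1N\log\phi_N^*\ge D^*-\delta_N-\tfrac{\log M}{N}\to D^*$, which closes the sandwich.

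I expect the only delicate point to be the balancing of the single threshold $\tau_N$ against the two competing requirements: it must be small enough that the law-of-large-numbers fluctuations of $\min_j L_j$ under $X=0$ still keep $\psi_N\ge1-\epsilon_N$, yet large enough that $Me^{-\tau_N}$ decays at rate $D^*$. This works precisely because $\alpha^*$ equalizes the per-step drift of every $L_j$ to exactly $D^*$ (Lemma~\ref{kldexp}) and because Assumption~\ref{epsassum} guarantees $-\log\epsilon_N/N\to0$, so the slack $\delta_N$ vanishes; everything else (the Hoeffding estimate, the change of measure) is routine.
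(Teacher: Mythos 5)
Your proposal is correct and follows essentially the same route the paper indicates: the weak converse of Lemma~\ref{cslb} for the upper bound, and for achievability the open-loop i.i.d.\ $\alpha^*$ experiment selection with a confidence-threshold inference rule (exactly the construction in the paper's remark, with the details deferred to \cite{kartikarxiv}), using the equalization $\alpha^*(j)D_j^j=D^*$, a concentration bound under Assumption~\ref{subgauss} for feasibility, and the change-of-measure bound $\phi_N\le e^{-\tau_N}$ (your per-hypothesis version only costs a harmless factor of $M$). The balancing of $\tau_N=N(D^*-\delta_N)$ against $\epsilon_N$ via Assumption~\ref{epsassum} is handled correctly.
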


\begin{remark}
One construction of inference and experiment selection strategies that achieve asymptotic optimality is the following. At each time $n$, randomly select a component $u$ with probability $\alpha^*(u)$. Then at time $N+1$, if the confidence $\mathcal{C}(I_{N+1},\tilde{\rho}_1)$ exceeds a threshold $\theta_N$ (precisely defined in \cite{kartikarxiv}), declare the system safe, and unsafe otherwise. Note that this experiment selection strategy is open-loop and randomized. An approach to design asymptotically optimal \emph{deterministic} and \emph{adaptive} experiment selection strategies is presented in \cite{kartikarxiv}.
\end{remark}

\subsection{Non-asymptotic Results for Homogeneous Systems}
We will now state our more recent non-asymptotic results for homogeneous systems.
The Kullback-Leibler divergence $D_u^u$ in homogeneous systems does not depend on the component $u$ and thus, we simply denote it by $D$. Using Lemma \ref{kldexp}, we can conclude that $D^* = D/M$, and that $\alpha^*$ and $\beta^*$ are uniform distributions over the set of components $\mathcal{U}$.

\begin{definition}\label{zdef}
For a given experiment selection strategy $g$ and a component $j \in \mathcal{U}$, define the total log-likelihood ratio up to time $n$ as
\begin{align*}
\rv{Z}_n(j) \doteq  \sum_{k=1}^n\lambda_j(\rv{U}_k,\rv{Y}_k),
\end{align*}
where the log-likelihood ratio $\lambda_j$ is as defined in equation (\ref{llrdef}). Also, let
\begin{align*}
L_n &\doteq \sum_{j \in \mathcal{U}}\beta^*(j)\lambda_j(U_n,Y_n)\\
\bar{Z}_n &\doteq  \sum_{j \in \mathcal{U}}\beta^{*}(j)Z_n(j) = \sum_{k = 1}^n L_k,
\end{align*}
where $\beta^{*}$ is the min-maximizing distribution in Definition \ref{kldef}.
\end{definition}

Since the system is homogeneous, we have
\begin{align}
L_n  &= \frac{1}{M}\sum_{j \in \mathcal{U}}\lambda_j(U_n,Y_n)= \frac{1}{M}\log\frac{p_0(Y_n)}{p_1(Y_n)}.
\end{align}
Notice that given $X=0$, the distribution of $L_n$ does not depend on $U_n$ and thus, on the strategy $g$.

\begin{lemma}\label{indeplemma}
When the system is homogeneous, $L_n$ is an i.i.d. sequence and the process $\bar{Z}_n$ is the sum of these i.i.d. random variables.
\end{lemma}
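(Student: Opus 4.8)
The plan is to work under the conditional probability measure $\Py_0^g[\cdot]=\Py^g[\cdot\mid X=0]$, which is the regime flagged in the sentence preceding the lemma. The first observation is that when $X=0$ we always have $X\neq U_n$, since $U_n\in\mathcal{U}=\{1,\dots,M\}$ and $0\notin\mathcal{U}$; hence $Y_n=\Upsilon(U_n,W_n)$ for every $n$. By homogeneity, for every component $u\in\mathcal{U}$ the random variable $\Upsilon(u,W_n)$ has density $p_0$ with respect to $\nu$, i.e.\ the law of $\Upsilon(u,W_n)$ does not depend on $u$ (nor, since the $W_n$ are identically relevant only through this density, on $n$).

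Next I would establish that, under $\Py_0^g$, the observations $Y_1,\dots,Y_N$ are i.i.d.\ with common density $p_0$, by induction on $n$ using the causal structure of the model. The action $U_n$ is drawn from $g_n(I_n)$ using fresh randomization independent of $\{W_k\}$, and $I_n$ is a function of $(U_{1:n-1},Y_{1:n-1})$, hence of $(W_{1:n-1},\text{prior action randomization})$. Therefore $W_n$ is independent of $(I_n,U_n)$, and conditioned on $\{I_n=\iota,U_n=u\}$ the conditional law of $Y_n=\Upsilon(u,W_n)$ equals $p_0$ regardless of $(\iota,u)$. Consequently $Y_n$ is independent of $I_n$ and has density $p_0$; since $Y_{1:n-1}$ is $I_n$-measurable, $Y_n$ is independent of $Y_{1:n-1}$, and the induction closes (the base case $n=1$ being $I_1=\emptyset$). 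Assumption~\ref{boundedassump} then guarantees the variables below have finite second moment.

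The remaining steps are immediate. As derived just before the lemma statement, in the homogeneous case $L_n=\tfrac{1}{M}\log\frac{p_0(Y_n)}{p_1(Y_n)}$, which is one fixed measurable function $h(\cdot)$ applied to $Y_n$ for every $n$; applying a fixed map to an i.i.d.\ sequence yields an i.i.d.\ sequence, so $(L_n)_{n=1}^{N}$ is i.i.d.\ under $\Py_0^g$. Finally, $\bar{Z}_n=\sum_{k=1}^{n}L_k$ holds by Definition~\ref{zdef}, so $\bar{Z}_n$ is literally a sum of these i.i.d.\ random variables, which completes the proof.

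The only delicate point is the conditional-independence argument in the second paragraph: one must be careful that the action-selection randomization in $g_n$ is independent of the noise variables $\{W_k\}$, and then push a tower/conditioning argument through so that the law of $Y_n$ given the entire past is exactly $p_0$, independent of the strategy $g$. Once that is in hand, the rest is bookkeeping and an appeal to the definitions.
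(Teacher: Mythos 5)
Your proof is correct, and it rests on exactly the same key observation as the paper's: under $X=0$ the hypothesis can never equal the probed component, so by homogeneity the conditional law of $Y_n$ given the past $I_n$ and the action $U_n$ is $p_0$ no matter what $(I_n,U_n)$ are, hence no matter what $g$ is. Where you differ is in how independence is certified. The paper runs a tower-property computation on $\E_0^g[\exp(\sum_k s_k L_k)]$, peeling off one factor $\bar\mu(s_n)=\E_0[\exp(s_nL_n)\mid I_n]$ at a time so that the joint moment generating function of $(L_1,\dots,L_n)$ factorizes into $\prod_k\bar\mu(s_k)$. You instead prove the stronger statement that $Y_1,\dots,Y_N$ are themselves i.i.d.\ with density $p_0$ under $\Py_0^g$, and then note that $L_n=\frac{1}{M}\log\frac{p_0(Y_n)}{p_1(Y_n)}$ is a fixed measurable function of $Y_n$. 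Your route is more elementary and arguably more robust: the MGF-factorization argument tacitly requires $\bar\mu(s)$ to be finite on an open set, which Assumption~\ref{boundedassump} (finite second moments of the log-likelihood ratios) does not by itself guarantee, whereas your conditional-distribution induction needs no integrability at all. The ``delicate point'' you flag --- that the randomization device generating $U_n\sim g_n(I_n)$ must be independent of the noise $\{W_k\}$ --- is indeed the only modeling assumption being leaned on, and it is implicit in the paper's step $\E_0[\exp(s_nL_n)\mid I_n]=\bar\mu(s_n)$ as well; both proofs use it in the same place.
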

\begin{proof}
See Appendix \ref{indeplemmaproof}.
\end{proof}

Let $\textsc{inv}_n$ denote the quantile function (which is the same as the inverse-cdf if it exists) associated with the random varible $\bar{Z}_n + D(\beta^{*} || \tilde{\rho}_1)$.

\begin{theorem}\label{mainthm}
For homogeneous systems, we have
\begin{align}
\label{lb1}-\log{\phi^*_N} & \leq \textsc{inv}_N\left(\epsilon_N + \frac{\epsilon_N}{\eta}\right) + \log \frac{\eta}{\epsilon_N}\\
\label{ub1}-\log{\phi^*_N} & \geq \textsc{inv}_N\left(\epsilon_N - \frac{\epsilon_N}{\eta}\right) - O\left(\log \frac{\eta}{\epsilon_N}\right),
\end{align}
for any $\eta > 1$ as long as the argument of $\textsc{inv}_N \in (0,1)$. Note that $\eta$ may also depend on $N$.
\end{theorem}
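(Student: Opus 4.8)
The plan is to recast Problem~\eqref{opti} as a single binary composite hypothesis test and to exploit Lemma~\ref{indeplemma}, namely that conditioned on $X=0$ the law of $\bar Z_N$ (equivalently of $\bar Z_N+D(\beta^*\|\tilde\rho_1)$) does not depend on the experiment-selection strategy. Fix a strategy $g$ and let $Q_0$ (resp.\ $Q_j$) denote the law of $(U_{1:N},Y_{1:N})$ conditioned on $X=0$ (resp.\ $X=j$), and set $\bar Q_1\doteq\sum_{j}\tilde\rho_1(j)Q_j$. The experiment-selection factors are common to $dQ_0$ and every $dQ_j$ along any realized path, so they cancel and one gets $\mathcal C(I_{N+1},\rho_1)=\log\frac{dQ_0}{d\bar Q_1}(I_{N+1})$ $Q_0$-a.s.; moreover $\psi_N=Q_0[\hat X_N=0]$ and $\phi_N=\bar Q_1[\hat X_N=0]$ (with the randomized decision read as a test function). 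Writing $\tilde Z_n(j)\doteq Z_n(j)-\log\tilde\rho_1(j)$, so that $\mathcal C(I_{N+1},\rho_1)=-\log\sum_{j}e^{-\tilde Z_N(j)}$, two elementary estimates carry the argument: by the Gibbs variational principle $\mathcal C(I_{N+1},\rho_1)=\inf_{q\in\Delta\mathcal U}\{\sum_j q(j)Z_N(j)+D(q\|\tilde\rho_1)\}\le \bar Z_N+D(\beta^*\|\tilde\rho_1)$ (take $q=\beta^*$), and, since $\frac1M\sum_j\tilde Z_N(j)=\bar Z_N+D(\beta^*\|\tilde\rho_1)+\log M$, also $\mathcal C(I_{N+1},\rho_1)\ge\min_j\tilde Z_N(j)-\log M\ge \bar Z_N+D(\beta^*\|\tilde\rho_1)-\mathrm{spr}_N$, where $\mathrm{spr}_N\doteq\max_j\tilde Z_N(j)-\min_j\tilde Z_N(j)$.

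\textbf{Converse~\eqref{lb1}.} For any feasible $(f,g)$ and any $t$, the standard Neyman--Pearson change-of-measure converse gives
\[
\phi_N\ \ge\ e^{-t}\Big((1-\epsilon_N)-Q_0\big[\mathcal C(I_{N+1},\rho_1)>t\big]\Big).
\]
Let $F_N$ be the c.d.f.\ of $\bar Z_N+D(\beta^*\|\tilde\rho_1)$ under $Q_0$ (strategy-independent, with quantile $\textsc{inv}_N$). Using $\mathcal C\le\bar Z_N+D(\beta^*\|\tilde\rho_1)$ we get $Q_0[\mathcal C>t]\le 1-F_N(t)$, hence $\phi_N\ge e^{-t}(F_N(t)-\epsilon_N)$. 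Taking $t=\textsc{inv}_N(\epsilon_N+\epsilon_N/\eta)$ makes $F_N(t)-\epsilon_N\ge\epsilon_N/\eta>0$, so $-\log\phi_N\le t+\log(\eta/\epsilon_N)$; since this holds for every feasible pair, \eqref{lb1} follows.

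\textbf{Achievability~\eqref{ub1}.} I would use the deterministic adaptive rule that at each time $n$ probes $U_n\in\argmin_{j}\tilde Z_{n-1}(j)$ (ties broken by index), together with the threshold inference $\hat X_N=0\iff\mathcal C(I_{N+1},\rho_1)\ge\theta$. For \emph{any} strategy and any $\theta$, the change of measure on $\{\mathcal C\ge\theta\}=\{dQ_0/d\bar Q_1\ge e^{\theta}\}$ yields $\phi_N=\bar Q_1[\mathcal C\ge\theta]\le e^{-\theta}$, i.e.\ $-\log\phi_N\ge\theta$. For the constraint, $\{\mathcal C<\theta\}\subseteq\{\mathrm{spr}_N>c\}\cup\{\bar Z_N+D(\beta^*\|\tilde\rho_1)<\theta+c\}$ for every $c>0$, so $Q_0[\mathcal C<\theta]\le Q_0[\mathrm{spr}_N>c]+F_N((\theta+c)^-)$. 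The key estimate, which I call the \emph{balancing lemma}, is that under this rule, using Assumption~\ref{subgauss} (bounded log-likelihood increments) and the strictly positive conditional mean $D>0$ of each increment under $X=0$, one has $Q_0[\mathrm{spr}_N>c]\le\epsilon_N/\eta$ for some $c=O(\log(\eta/\epsilon_N))$ (a fixed additive problem-dependent term being absorbed, and $N$ large in the sense of Assumption~\ref{epsassum}). Choosing this $c$ and $\theta=\textsc{inv}_N(\epsilon_N-\epsilon_N/\eta)-c$ gives $Q_0[\mathcal C<\theta]\le\epsilon_N/\eta+(\epsilon_N-\epsilon_N/\eta)=\epsilon_N$, so the strategy is feasible, and $-\log\phi_N^*\ge\theta=\textsc{inv}_N(\epsilon_N-\epsilon_N/\eta)-O(\log(\eta/\epsilon_N))$.

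\textbf{Main obstacle.} Every step above is short except the balancing lemma, which is where the non-asymptotic content lies: one must show that greedily probing the currently lagging component keeps the $M$ partial log-likelihood ratios within $O(\log(\eta/\epsilon_N))$ of one another \emph{at the terminal time} $N$, with the stated failure probability. I would attack this by a Lyapunov/drift argument: whenever $\mathrm{spr}_n$ is large, the probed (minimum) coordinate is pulled upward by a uniformly positive amount in conditional mean while every increment is bounded (Assumption~\ref{subgauss}), so $\mathrm{spr}_n$ has a negative drift at high values; turning this into an exponential supermartingale that dominates $e^{\lambda\,\mathrm{spr}_n}$ for small $\lambda>0$ and invoking the fast relaxation of the centered process to its (exponentially-tailed) stationary law yields the desired tail at the single time $N$. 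The $\epsilon_N\mp\epsilon_N/\eta$ split between \eqref{lb1} and \eqref{ub1} is exactly the probability budget allotted to this event. A minor additional point is to verify the regularity of $\textsc{inv}_N$ (right-continuity of $F_N$ and its behaviour at atoms) so that the quantile manipulations are valid whenever the arguments of $\textsc{inv}_N$ lie in $(0,1)$.
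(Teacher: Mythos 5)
Your converse argument is, step for step, the paper's: the Neyman--Pearson change-of-measure bound applied to the induced one-shot binary test with log-likelihood ratio $\mathcal{C}(I_{N+1},\rho_1)$, followed by the strategy-independent bound $\mathcal{C}(I_{N+1},\rho_1)\le \bar Z_N+D(\beta^{*}||\tilde\rho_1)$ (your Gibbs-variational inequality with $q=\beta^*$ is exactly the paper's decomposition $\mathcal{C}=-D(\beta^{*}||\tilde\rho_{N+1})+\bar Z_N+D(\beta^{*}||\tilde\rho_1)$ from Lemma~\ref{decom} plus non-negativity of the KL term), and the choice $t=\textsc{inv}_N(\epsilon_N+\epsilon_N/\eta)$. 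That half is complete and correct. Your achievability also matches the paper's architecture exactly: the greedy rule probing $\argmin_j\{Z_{n-1}(j)-\log\tilde\rho_1(j)\}$ is the paper's DAS, the threshold test gives $\phi_N\le e^{-\theta}$ as in Lemma~\ref{thresh}, and the union-bound split of the budget $\epsilon_N$ into $\epsilon_N/\eta$ for a ``posterior concentration'' event and $\epsilon_N-\epsilon_N/\eta$ for the fluctuation of $\bar Z_N$ (the latter disposed of by the definition of $\textsc{inv}_N$) is the paper's split into \eqref{acond1} and \eqref{acond2}; your spread $\mathrm{spr}_N$ is interchangeable with the paper's $D(\beta^{*}||\tilde\rho_{N+1})$ term up to constants.

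The genuine gap is your ``balancing lemma,'' $Q_0[\mathrm{spr}_N>c]\le\epsilon_N/\eta$ with $c=O(\log(\eta/\epsilon_N))$: this single estimate is where essentially all of the paper's work lies, and you only gesture at it. The paper proves the analogue (inequality \eqref{acond1}) by showing that $a_n\doteq\sum_j\E_0^g\exp(-s\zeta_n(j))$ satisfies $a_{n+1}\le K+\varsigma a_n$ with $\varsigma<1$: when the minimizing coordinate carries more than a $(1+\delta)/M$ fraction of the mass the sum strictly contracts (Lemma~\ref{achieve2}, via first-order expansions of the moment generating functions $m(s),\bar m(s)$ at $s=0$), and otherwise the sum is already bounded by a constant because $\sum_j\zeta_n(j)=0$ (Lemma~\ref{achieve1}); a Chernoff bound then converts the resulting uniform-in-$N$ exponential-moment bound $K'$ into the required tail with $\theta_{N,1}=-O(\log(\eta/\epsilon_N))$. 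Your proposed exponential-supermartingale/drift argument for $e^{\lambda\,\mathrm{spr}_n}$ is morally the same contraction, so the route is viable, but as written it is an assertion rather than a proof, and the appeal to ``fast relaxation to a stationary law'' would itself need justification since the bound must hold at the single deterministic time $N$ and the process is not started at stationarity. A secondary point: you invoke Assumptions~\ref{epsassum} and~\ref{subgauss} for this step, whereas the paper explicitly states that Theorem~\ref{mainthm} requires neither; its argument uses only the behaviour of $m(s)$ and $\bar m(s)$ near $s=0$, not bounded increments, so your version would prove a narrower statement.
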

\begin{remark}
The bound in \eqref{ub1} is stated in big-O notation because the constants associated with the logarithmic term are difficult to determine in general. Herein, we only prove the \emph{existence} of constants that achieve the bound \eqref{ub1}.
\end{remark}
We would like to emphasize that Theorem \ref{mainthm} does not require Assumptions \ref{epsassum} and \ref{subgauss}. The bound in \eqref{lb1} is based on the strong converse theorem \cite{polyathesis} and other properties the log-likelihood ratios $Z_n(j)$. The result in \eqref{ub1} is obtained by constructing experiment selection and inference strategies and bounding their performance. The approach used for bounding performance is based on the well-known Chernoff-bound \cite{ross2014introduction}. The experiment selection strategy that achieves the bound is as follows: at each time $n$, select the component $j$ that minimizes $Z_{n-1}(j) - \log \tilde{\rho}_1(j)$. The inference strategy is a simple confidence-threshold based strategy. Detailed descriptions of these strategies and the complete proof of Theorem \ref{mainthm} is provided in Section \ref{analysis}.

\begin{remark}\label{plotrem}
For a simple setup with two components and a binary observation space, we numerically examine the performance of our deterministic adaptive strategy (DAS, described above) and the open-loop randomized strategy (ORS) which selects each component with probability 0.5.. Figure \ref{fig:plot1} depicts the results of these numerical experiment and also, the weak \eqref{supbound} and strong \eqref{lb1} converse bounds. Notice that the performance of the deterministic strategy is very close to the strong converse indicating the tightness of the converse bound and efficiency of the designed strategy. More details on these results are provided in \cite{kartikarxiv}.
\end{remark}

\begin{figure}[h]
\centering
\includegraphics[width = \columnwidth]{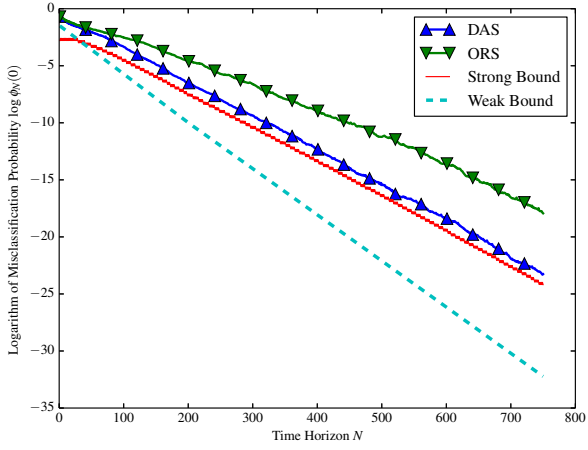}
\caption{The plot depicts the converse bounds and the performance of the experiment selection strategies mentioned in Remark \ref{plotrem}.}
\label{fig:plot1}
\end{figure}

 Since $L_n$ is an i.i.d. collection of random variables, we can further simplify these bounds by approximating the quantile function $\textsc{inv}_N$ using the Berry-Esseen Theorem \cite{polyathesis}.

\begin{corollary}[Berry-Esseen]\label{bess}
If $V \doteq \E_0[(L_1 - D^*)^2]$ and $T \doteq \E_0[|L_1 - D^*|^3] < \infty$, then
\begin{align}
\label{lb2}&-\log{\phi^*_N} \leq \\
\nonumber & {ND^*} - {\sqrt{NV}}Q^{-1}\left(\epsilon_N+\frac{\epsilon_N}{\eta} + \frac{6T}{\sqrt{NV^3}}\right) +O\left(\log\frac{\eta}{\epsilon_N} \right),\\
\label{ub2}&-\log{\phi^*_N} \geq \\
\nonumber & {ND^*} - {\sqrt{NV}}Q^{-1}\left(\epsilon_N-\frac{\epsilon_N}{\eta} - \frac{6T}{\sqrt{NV^3}}\right) -O\left(\log\frac{\eta}{\epsilon_N} \right).
\end{align}
Here, the $Q$-function is the tail distribution function of the standard normal distribution. The results above are valid only when the argument of $Q^{-1}$ is between 0 and 1.
\end{corollary}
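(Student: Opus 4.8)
\emph{Proof strategy.} The plan is to start from the two bounds of Theorem \ref{mainthm} and replace the quantile function $\textsc{inv}_N$ by an explicit Gaussian quantile using the Berry--Esseen theorem; this is legitimate precisely because, by Lemma \ref{indeplemma}, $\bar Z_N$ is a sum of i.i.d.\ terms under $X=0$. First I would record the relevant first two moments: since $Y_n \sim p_0$ under $X=0$, we have $\E_0[L_1] = \tfrac1M\E_0\big[\log\tfrac{p_0(Y_1)}{p_1(Y_1)}\big] = \tfrac1M D(p_0 || p_1) = \tfrac{D}{M} = D^*$, while the per-sample variance is $V = \E_0[(L_1-D^*)^2]$ by definition; hence $\bar Z_N$ has mean $ND^*$ and variance $NV$ under hypothesis $0$. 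Let $F_N$ denote the cdf of $\bar Z_N$ and let $\Phi_N(x) \doteq 1 - Q\big((x-ND^*)/\sqrt{NV}\big)$ be the cdf of the Gaussian with the same mean and variance. Since $\bar Z_N + D(\beta^{*} || \tilde{\rho}_1)$ is a deterministic shift of $\bar Z_N$, its quantile function satisfies $\textsc{inv}_N(p) = F_N^{-1}(p) + D(\beta^{*} || \tilde{\rho}_1)$.

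Next I would invoke Berry--Esseen, which under the stated finiteness of $V$ and $T = \E_0[|L_1-D^*|^3]$ yields $\sup_x |F_N(x) - \Phi_N(x)| \le 6T/\sqrt{NV^3} \doteq \delta_N$. The key step is then to convert this sup-norm estimate on the cdf into a two-sided estimate on the quantile. Working from the definition $F_N^{-1}(p) = \inf\{x : F_N(x) \ge p\}$ together with the continuity and strict monotonicity of $\Phi_N$, the inequality $F_N \le \Phi_N + \delta_N$ forces $F_N^{-1}(p) \ge \Phi_N^{-1}(p-\delta_N)$, and the inequality $F_N \ge \Phi_N - \delta_N$ forces $F_N^{-1}(p) \le \Phi_N^{-1}(p+\delta_N)$, valid as long as $p \pm \delta_N \in (0,1)$. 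Using $\Phi_N^{-1}(q) = ND^* + \sqrt{NV}\,\Phi^{-1}(q) = ND^* - \sqrt{NV}\,Q^{-1}(q)$ and adding back the constant $D(\beta^{*} || \tilde{\rho}_1)$ gives
\begin{align*}
ND^* - \sqrt{NV}\,Q^{-1}(p - \delta_N) + D(\beta^{*} || \tilde{\rho}_1) &\le \textsc{inv}_N(p)\\
&\le ND^* - \sqrt{NV}\,Q^{-1}(p + \delta_N) + D(\beta^{*} || \tilde{\rho}_1).
\end{align*}

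Finally I would substitute into Theorem \ref{mainthm}: plugging the upper estimate on $\textsc{inv}_N$ with $p = \epsilon_N + \epsilon_N/\eta$ into \eqref{lb1} produces \eqref{lb2}, and plugging the lower estimate with $p = \epsilon_N - \epsilon_N/\eta$ into \eqref{ub1} produces \eqref{ub2}, where the nonnegative constant $D(\beta^{*} || \tilde{\rho}_1)$ (together with the previously exact $\log(\eta/\epsilon_N)$ term in \eqref{lb1}) is absorbed into the $O(\log(\eta/\epsilon_N))$ terms, and all arguments of $Q^{-1}$ are assumed to lie in $(0,1)$ exactly as in the statement. I expect the main obstacle to be the quantile-inversion step: $F_N$ need not be continuous or strictly increasing -- which is exactly why Theorem \ref{mainthm} is phrased with a quantile function rather than an inverse cdf -- so the passage from $\|F_N - \Phi_N\|_\infty \le \delta_N$ to bounds on $F_N^{-1}$ must go through the $\inf$-characterization of the quantile rather than a naive inversion of inequalities, and one has to keep track of the region where the perturbed levels $p \pm \delta_N$ remain in $(0,1)$ so that $\Phi_N^{-1}$ (equivalently $Q^{-1}$) is well defined.
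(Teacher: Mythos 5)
Your proof is correct and matches the paper's in substance: both apply the Berry--Esseen theorem to the i.i.d.\ sum $\bar{Z}_N$ (with mean $ND^*$ and variance $NV$ under hypothesis $0$) and feed the resulting Gaussian approximation into the converse and achievability machinery behind Theorem \ref{mainthm}. The only cosmetic difference is that the paper sidesteps the quantile-inversion step you flag as the main obstacle: it substitutes the Gaussian-quantile expression directly as the threshold $\chi^*$ in Lemma \ref{strongconv} (resp.\ as $\theta_{N,2}$ in condition \eqref{acond2}) and verifies the required cdf inequality at that single point, which is exactly the one-sided Berry--Esseen inequality you use to bound $F_N^{-1}$.
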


\section{Outline of Proof of Theorem \ref{mainthm}}\label{analysis}

We will first define some important quantities that will be used in our analysis. Let us arbitrarily fix the experiment selection strategy to be some $g \in \mathcal{G}$. For a given instance $\iota_{N+1} = \{u_{1:N},y_{1:N} \}$ of information, let $\iota_{n+1} =  \{u_{1:n},y_{1:n} \}$ for $n \leq N$. Define
\begin{align}
P^g_{N}(\iota_{N+1}) &\doteq \prod_{n = 1}^N \Py^g[u_n \mid \iota_n]p^{u_n}_0(y_n)\\
Q^g_{N}(\iota_{N+1}) &\doteq \sum_{j \in \mathcal{U}}\tilde{\rho}_1(j) \prod_{n = 1}^N \Py^g[u_n \mid \iota_n]p^{u_n}_{\mathbbm{1}(u_n=j)}(y_n).
\end{align}
The quantities $P^g_{N}$ and $Q_{N}^g$ are densities over the space $\mathcal{I}_{N+1}$ conditioned on $X=0$ and $X \neq 0$ respectively. The densities are with respect to the product measure $\Bbbk^N \times \nu^N$ where $\Bbbk$ denotes the counting measure. 
Clearly, for any strategy $g$,
\begin{align}
\mathcal{C}(I_{N+1},\rho_1) = \log\frac{P^g_{N}(I_{N+1})}{Q^g_{N}(I_{N+1})}
\end{align}
with probability 1. 

The first step in analyzing Problem \eqref{opti} with the experiment selection strategy ($g$) fixed is to view it as a one-shot binary hypothesis testing problem in the following manner: $N$ experiments are performed using the strategy $g$ and then we observe the sequence $I_{N+1}$ of experiments and observations. This observed sequence $I_{N+1}$ can be viewed as a \emph{single observation}. Based on $I_{N+1}$, we need to infer whether the system is safe ($X = 0$) or it is unsafe ($X \neq 0$) using the inference strategy $f$. If $X = 0$, the density associated with the observed sequence $I_{N+1}$ is $P_N^g$ and if $X \neq 0$, then the density associated with the sequence $I_{N+1}$ is $Q_N^g$. Thus, $\mathcal{C}(I_{N+1}, \rho_1)$ is the log-likelihood ratio associated with this one-shot binary hypothesis testing problem. We can now use the strong converse (or the weak converse) \cite{polyanotes} to obtain a lower bound on $\phi_N^*$. However, this bound will depend on the choice of strategy $g$.

We can exploit various properties of the confidence level $\mathcal{C}$ to obtain a strategy-independent lower bound on $\phi_N^*$. To do so, we will first establish the relationship between the total log-likelihood ratios $Z_n(j)$ and the confidence level $\mathcal{C}$.

\begin{lemma}\label{logsumexplemma}
For any experiment selection strategy $g$ and for each $1 \leq n \leq N$, we have
\begin{align}
\nonumber\mathcal{C}(I_{n+1},\vct{\rho}_{1}) =  - \log\left[\sum_{j \in \mathcal{U}}\exp\Big(\log\tilde{\rho}_1(j) -  Z_n(j)\Big)\right],
\end{align}
where $\tilde{\rho}_1(j) = \rho_1(j)/(1-\rho_1(0)).$
\end{lemma}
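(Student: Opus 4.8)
The plan is to unwind the definitions of $\mathcal{C}$ and $Z_n(j)$ and recognize the right-hand side as a log-sum-exp identity. Starting from Definition \ref{conflevel} applied to the truncated information $\iota_{n+1} = \{u_{1:n}, y_{1:n}\}$, we have
\begin{align*}
\mathcal{C}(\iota_{n+1},\rho_1) = \log \frac{\prod_{k=1}^n p_0^{u_k}(y_k)}{\sum_{j \in \mathcal{U}} \tilde{\rho}_1(j) \prod_{k=1}^n p_{\mathbbm{1}(u_k=j)}^{u_k}(y_k)}.
\end{align*}
The numerator inside the logarithm is a common factor: dividing numerator and denominator by $\prod_{k=1}^n p_0^{u_k}(y_k)$, each summand in the denominator becomes $\tilde{\rho}_1(j) \prod_{k=1}^n \frac{p_{\mathbbm{1}(u_k=j)}^{u_k}(y_k)}{p_0^{u_k}(y_k)}$. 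The key observation is that for a fixed $j$, the product $\prod_{k=1}^n \frac{p_{\mathbbm{1}(u_k=j)}^{u_k}(y_k)}{p_0^{u_k}(y_k)}$ equals $\exp(-Z_n(j))$, since $\lambda_j(u_k,y_k) = \log\frac{p_0^{u_k}(y_k)}{p_1^{u_k}(y_k)}$ when $u_k = j$ and $0$ otherwise, which is exactly $-\log\frac{p_{\mathbbm{1}(u_k=j)}^{u_k}(y_k)}{p_0^{u_k}(y_k)}$ in both cases (the ratio being $1$ and the log being $0$ when $u_k \neq j$).

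With that substitution, the expression becomes $\mathcal{C}(\iota_{n+1},\rho_1) = -\log\left[\sum_{j \in \mathcal{U}} \tilde{\rho}_1(j) \exp(-Z_n(j))\right]$, and writing $\tilde{\rho}_1(j) = \exp(\log \tilde{\rho}_1(j))$ and absorbing it into the exponent gives the claimed form $-\log\left[\sum_{j\in\mathcal{U}} \exp(\log\tilde{\rho}_1(j) - Z_n(j))\right]$. The final step is to note that $\mathcal{C}(I_{n+1},\rho_1) = \log\frac{P_n^g(I_{n+1})}{Q_n^g(I_{n+1})}$ holds with probability one (the experiment-selection probabilities $\Py^g[u_k \mid \iota_k]$ appearing in $P_n^g$ and $Q_n^g$ cancel in the ratio), so the identity for the deterministic function $\mathcal{C}(\cdot,\rho_1)$ lifts to an almost-sure statement about the random variable $\mathcal{C}(I_{n+1},\rho_1)$ for any $g$.

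I do not anticipate a genuine obstacle here: the lemma is essentially a bookkeeping identity. The only point requiring mild care is the handling of densities when some $p_0^{u_k}(y_k) = 0$, i.e.\ ensuring the division step is valid almost surely — but under hypothesis $0$ (the conditioning implicit in treating $P_n^g$ as the relevant density) the event $\{p_0^{u_k}(Y_k) = 0\}$ has probability zero, and the stated identity is an equality of functions on $\mathcal{I}_{n+1}$ that one then evaluates at the random argument $I_{n+1}$, so the "with probability 1" qualifier absorbs any null-set ambiguity. I would also remark that boundedness of the second moments (Assumption \ref{boundedassump}) guarantees $Z_n(j)$ is well-defined and finite almost surely, so every term in the log-sum-exp is a genuine real number.
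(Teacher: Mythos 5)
Your proof is correct: the identity follows exactly as you describe by dividing numerator and denominator of the confidence level by $\prod_{k=1}^n p_0^{u_k}(y_k)$ and recognizing each ratio $\prod_{k=1}^n p^{u_k}_{\mathbbm{1}(u_k=j)}(y_k)/p_0^{u_k}(y_k)$ as $\exp(-Z_n(j))$. The paper itself defers the proof to the cited reference \cite{kartikarxiv}, where the same definitional bookkeeping is carried out, so your argument matches the intended one.
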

\begin{proof}
The proof of this lemma can be found in \cite{kartikarxiv}.
\end{proof}

We use Lemma  \ref{logsumexplemma} to decompose the confidence into a Kullback-Leibler divergence term and a sub-martingale (a simple i.i.d. sum in the case of homogeneous systems).

\begin{lemma}[Decomposition]\label{decom}
For any experiment selection strategy $g$, we have
\begin{align*}
\mathcal{C}(I_{n+1},\vct{\rho}_{1})  =\Big[-D(\beta^{*}||\tilde{\rho}_{n+1})\Big] +\Big[ \bar{Z}_n + D(\beta^{*} || \tilde{\rho}_1) \Big],
\end{align*}
where $$\tilde{\rho}_{n+1}(j) = \frac{\tilde{\rho}_1(j)e^{-Z_n(j)}}{\sum_{k \in \mathcal{U}}\tilde{\rho}_1(k)e^{-Z_n(k)}}.$$
\end{lemma}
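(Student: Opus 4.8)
The plan is to start from the closed-form expression for the confidence level provided by Lemma~\ref{logsumexplemma}, namely
\[
\mathcal{C}(I_{n+1},\vct{\rho}_1) = -\log\Bigl[\sum_{j \in \mathcal{U}}\exp\bigl(\log\tilde{\rho}_1(j) - Z_n(j)\bigr)\Bigr],
\]
and to manipulate the log-sum-exp term so that the normalizing constant of the tilted distribution $\tilde{\rho}_{n+1}$ appears explicitly. The key algebraic identity is that for any probability vector $\beta^{*}$ and any vector of numbers $a_j \doteq \log\tilde{\rho}_1(j) - Z_n(j)$, one has $-\log\sum_j e^{a_j} = -D(\beta^{*}\,\|\,\sigma) + \sum_j \beta^{*}(j) a_j$, where $\sigma(j) \doteq e^{a_j}/\sum_k e^{a_k}$ is exactly the softmax of the $a_j$'s; this is just the variational/Gibbs identity $\log\sum_j e^{a_j} = \sup_{\mu}\{\sum_j \mu(j)a_j + H(\mu)\}$ evaluated, or more directly a one-line rearrangement of $D(\beta^{*}\|\sigma) = \sum_j \beta^{*}(j)\log\beta^{*}(j) - \sum_j\beta^{*}(j)a_j + \log\sum_k e^{a_k}$ together with $\sum_j\beta^{*}(j)\log\beta^{*}(j) = -D(\beta^{*}\|\tilde{\rho}_1) + \sum_j\beta^{*}(j)\log\tilde{\rho}_1(j)$. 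First I would observe that the softmax $\sigma$ of $a_j = \log\tilde{\rho}_1(j) - Z_n(j)$ is precisely $\tilde{\rho}_{n+1}$ as defined in the statement.

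Next I would carry out the substitution $a_j = \log\tilde{\rho}_1(j) - Z_n(j)$ into $\sum_j \beta^{*}(j)a_j$, splitting it as $\sum_j \beta^{*}(j)\log\tilde{\rho}_1(j) - \sum_j \beta^{*}(j)Z_n(j) = \sum_j \beta^{*}(j)\log\tilde{\rho}_1(j) - \bar{Z}_n$, using the definition of $\bar{Z}_n$ from Definition~\ref{zdef}. Combining this with the identity from the previous paragraph gives
\[
\mathcal{C}(I_{n+1},\vct{\rho}_1) = -D(\beta^{*}\,\|\,\tilde{\rho}_{n+1}) + \Bigl(\sum_j \beta^{*}(j)\log\tilde{\rho}_1(j)\Bigr) - \bar{Z}_n + \Bigl(-D(\beta^{*}\,\|\,\tilde{\rho}_1) + \sum_j\beta^{*}(j)\log\tilde{\rho}_1(j)\Bigr)\!,
\]
and the two $\sum_j\beta^{*}(j)\log\tilde{\rho}_1(j)$ contributions must cancel against one another after being tracked carefully through the Gibbs identity — I would verify this bookkeeping step so that what survives is exactly $-D(\beta^{*}\|\tilde{\rho}_{n+1}) + \bar{Z}_n + D(\beta^{*}\|\tilde{\rho}_1)$. (The sign on $\bar{Z}_n$ flips relative to $Z_n(j)$ because $a_j$ contains $-Z_n(j)$, and the claimed statement has $+\bar{Z}_n$; I would double-check the sign convention in Lemma~\ref{logsumexplemma} versus Definition~\ref{zdef} to make sure the final sign is $+\bar{Z}_n$ as stated, which it is since $\mathcal{C}$ increases as the $Z_n(j)$ decrease.)

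The main obstacle is bookkeeping rather than conceptual difficulty: one must be scrupulous about which cross-entropy/entropy terms are absorbed into which KL divergence, and about the fact that $\tilde{\rho}_1$ need not be normalized in the same way as $\beta^{*}$ — here $\tilde{\rho}_1$ \emph{is} a probability distribution on $\mathcal{U}$ (it is $\rho_1$ restricted to the unsafe hypotheses and renormalized by $1-\rho_1(0)$), so $D(\beta^{*}\|\tilde{\rho}_1)$ is a genuine KL divergence and no stray normalization constants appear. I would also note explicitly that the identity is purely deterministic — it holds pathwise for every realization $I_{n+1}$, with no appeal to homogeneity — so the ``sub-martingale'' and ``i.i.d.\ sum'' remarks in the surrounding text are consequences of Lemma~\ref{indeplemma} applied to the $\bar{Z}_n$ term afterward, not part of this lemma's proof. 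Finally I would remark that in the homogeneous case $\bar{Z}_n = \sum_{k=1}^n L_k$ with $L_k = \frac1M\log\frac{p_0(Y_k)}{p_1(Y_k)}$, making the decomposition's second bracket a centered-at-$nD^{*}$ random walk, which is what the non-asymptotic bounds of Theorem~\ref{mainthm} exploit.
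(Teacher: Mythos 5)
Your proposal is correct and follows essentially the same route as the paper: the paper's proof is exactly your ``one-line rearrangement,'' expanding $-H(\beta^{*},\tilde{\rho}_{n+1})=\sum_{j}\beta^{*}(j)\log\tilde{\rho}_{n+1}(j)$ via the definition of $\tilde{\rho}_{n+1}$, invoking Lemma~\ref{logsumexplemma} to identify the log-normalizer with $\mathcal{C}(I_{n+1},\rho_1)$, and converting cross-entropies to KL divergences by adding $H(\beta^{*},\beta^{*})$ to both sides. Do fix the sign bookkeeping you flagged: the correct relations are $-\log\sum_{k}e^{a_k}=-D(\beta^{*}\|\sigma)+\sum_{j}\beta^{*}(j)\log\beta^{*}(j)-\sum_{j}\beta^{*}(j)a_j$ and $\sum_{j}\beta^{*}(j)\log\beta^{*}(j)=+D(\beta^{*}\|\tilde{\rho}_1)+\sum_{j}\beta^{*}(j)\log\tilde{\rho}_1(j)$, after which the two $\sum_{j}\beta^{*}(j)\log\tilde{\rho}_1(j)$ terms enter with \emph{opposite} signs (not the same sign as in your display) and cancel, while $-\sum_{j}\beta^{*}(j)a_j$ contributes $+\bar{Z}_n$, yielding exactly the stated identity.
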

\begin{proof}
See Appendix \ref{decomproof}.
\end{proof}

The decomposition in Lemma \ref{decom} allows us to obtain the bounds in Theorem \ref{mainthm}. We can analyze the two terms in the decomposition separately. The key property to exploit is that $\bar{Z}_n$ is a sum of i.i.d. variables and does not depend on the strategy $g$. Further, for the strong converse, we simply exploit the non-negativity of Kullback-Leibler divergence. For the achievability bound, we can show that the adaptive strategy keeps the Kullback-Leibler divergence term small by means of a Chernoff bound. The details of this proof are provided in Appendix \ref{compproof}.

\section{Conclusions}
In this paper, we considered a fixed-horizon Neyman-Pearson type formulation for the problem of actively verifying whether a multi-component system is free of anomalies or not. We studied the asymptotics of the problem and for the specific class of homogeneous problems, we provided stronger non-asymptotic converse and performance bounds. We observed that the strong converse is fairly tight. The strong achievability bounds are order-optimal up to logarithmic terms but the constants associated with the logarithmic terms are not known. These constants remain to be analyzed for future work.

\section*{Acknowledgment}
This research was supported, in part, by National Science Foundation under Grant NSF CCF-1817200, CCF-1718560, CPS-1446901, Grant ONR N00014-15-1-2550, and Grant ARO W911NF1910269.


\newpage

\bibliographystyle{IEEEtran}
\bibliography{refs}

\clearpage

\appendices
\section{Proof of Lemma \ref{kldexp}}\label{kldexpproof}
We have
\begin{align}
D^* = \max_{\vct{\alpha} \in \Delta\mathcal{U}} \min_{j \in \mathcal{U}} \sum_{u \in \mathcal{U}}\alpha(u) D_j^u = \max_{\vct{\alpha} \in \Delta\mathcal{U}} \min_{j \in \mathcal{U}} \alpha(j) D_j^j,
\end{align}
because $D_j^u = 0$ if $j \neq u$. Based on a simple contradiction argument, we can conclude that for every $u \in \mathcal{S}$, the max-minimizer $\alpha^*$ must satisfy
\begin{align}
\alpha^*(u)D(p_0^u||p_1^u) &= D^*.
\end{align}
This, combined with the fact that $\alpha^*$ is a distribution over $\mathcal{S}$, leads us to the result. The distribution $\beta^*$ can be obtained in a similar manner.

\section{Proof of Lemma \ref{indeplemma}}\label{indeplemmaproof}
Let the moment generating function of $L_n$ be $\bar{\mu}(s)$. Therefore, for any strategy $g$, we have
\begin{align}
&\E_0^g[\exp(\sum_{k =1}^ns_kL_k)] = \E_0^g[\E_0\exp(\sum_{k =1}^ns_kL_k)\mid I_n]]\\
&=\nonumber \E_0^g[\exp(\sum_{k =1}^{n-1}s_kL_k)\E_0[\exp(s_nL_n)\mid I_n]]\\
&= \E_0^g[\exp(\sum_{k =1}^{n-1}s_kL_k)]\bar{\mu}(s_n) =\Pi_{k=1}^n\bar{\mu}(s_k).
\end{align}

 \section{Proof of Lemma \ref{decom}}\label{decomproof}
Using the definition of cross-entropy, we have
\begin{align*}
&-H(\beta^{*},\tilde{\rho}_{n+1}) = \sum_{j \in \mathcal{U}}\beta^{*}(j)\log\tilde{\rho}_{n+1}(j)\\
&\stackrel{a}{=}\sum_{j \in \mathcal{U}}\beta^{*}(j)\log\left(\frac{\tilde{\rho}_1(j)e^{-Z_n(j)}}{\sum_{k\in \mathcal{U}}\tilde{\rho}_1(k)e^{-Z_n(k)}} \right)\\
&= -H(\beta^{*},\tilde{\rho}_1) - \bar{Z}_n - \log\left[\sum_{k\in \mathcal{U}}\exp\left(\log\tilde{\rho}_1(k) -  Z_n(k)\right)\right]\\
&\stackrel{b}{=}-H(\beta^{*},\tilde{\rho}_1) - \bar{Z}_n + \mathcal{C}(I_{n+1},\vct{\rho}_1).
\end{align*}
Equality $(a)$ follows from the definition of $\tilde{\rho}_{n+1}$ and equality $(b)$ is a consequence of Lemma \ref{logsumexplemma}. The lemma then follows by adding $H(\beta^*,\beta^*)$ on both sides.

\section{Complete Proof}\label{compproof}

\subsection{Strong Converse}
For any given pair of inference and experiment selection strategies $f,g$ that are feasible in Problem (\ref{opti}), the confidence level $\mathcal{C}$ can be viewed as a log-likelihood ratio. Therefore for this strategy pair $f,g$, we have the following for every $\chi \in \R$
\begin{align}
&-\log{\phi_N}\\
 &\stackrel{a}{\leq}\, \chi - \log(\psi_N - \Py_0^g[\mathcal{C}(I_{N+1},\rho_1) > \chi])\\
& \stackrel{b}{\leq} \, \chi - \log(1-\epsilon_N - \Py_0^g[\mathcal{C}(I_{N+1},\rho_1) > \chi])\\
&= \, \chi - \log(\Py_0^g[\mathcal{C}(I_{N+1},\rho_1)  \leq \chi]-\epsilon_N )\label{strong3}
\end{align}
Here, we use the convention that if $x \leq 0$, then $\log x \doteq -\infty$. Inequality $(a)$ is a consequence of the strong converse theorem in \cite{polyanotes}. Inequality $(b)$ holds because $\psi_N \geq 1-\epsilon_N$. However, this lower bound on $\phi_N$ depends on the experiment selection strategy $g$. We can use the decomposition in Lemma \ref{decom} to obtain a strategy-independent lower bound.
We have
\begin{align}
&\Py^g_0[\mathcal{C}(I_{N+1},\rho_1)  \leq \chi] \\
&\stackrel{a}{=} \Py^g_0[-D(\beta^{*}||\tilde{\rho}_{N+1})+\bar{Z}_N + D(\beta^{*}||\tilde{\rho}_1) \leq \chi]\\
&\stackrel{b}{\geq} \Py_0[\bar{Z}_N + D(\beta^{*}||\tilde{\rho}_1) \leq \chi].\label{strong1}
\end{align}
Equality $(a)$ is a consequence of Lemma \ref{decom}, and since $D(\beta^{*}||\tilde{\rho}_{N+1}) \geq 0$, we have that the event
\begin{align}
\{-D(\beta^{*}||\tilde{\rho}_{N+1})+\bar{Z}_N + D(\beta^{*}||\tilde{\rho}_1) \leq \chi\} \\
\supseteq \{\bar{Z}_N + D(\beta^{*}||\tilde{\rho}_1) \leq \chi\},\label{strong2}
\end{align}
which results in the inequality $(b)$. Combining \eqref{strong3} and \eqref{strong1} leads us to the following lemma.
\begin{lemma}[Stong Converse]\label{strongconv}
For any given pair of inference and experiment selection strategies $f,g$ that are feasible in Problem (\ref{opti}), we have for every $\chi \in \R$
\begin{align*}
-\log{\phi_N} \leq \chi - \log(\Py^g_0[\bar{Z}_N + D(\beta^{*}||\tilde{\rho}_1) \leq \chi]-\epsilon_N ),
\end{align*}
with the convention that $\log x \doteq -\infty$ if $x \leq 0$.
\end{lemma}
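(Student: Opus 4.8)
The plan is to derive Lemma~\ref{strongconv} by combining the one-shot strong-converse bound for binary hypothesis testing with the decomposition of the confidence level in Lemma~\ref{decom}, thereby turning a strategy-dependent estimate into a strategy-independent one.

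First I would fix the feasible pair $f,g$ and regard the experiment selection strategy $g$ as frozen, so that $I_{N+1}$ becomes a single observation whose density is $P_N^g$ under $X=0$ and $Q_N^g$ under $X\neq0$ (both with respect to $\Bbbk^N\times\nu^N$), and $\mathcal{C}(I_{N+1},\rho_1)=\log(P_N^g/Q_N^g)$ almost surely. The inference $\hat{X}_N$ is then a (possibly randomized) binary test of $P_N^g$ against $Q_N^g$ with $\psi_N=\Py^{f,g}[\hat{X}_N=0\mid X=0]$ and $\phi_N=\Py^{f,g}[\hat{X}_N=0\mid X\neq0]$. The core estimate is the change-of-measure identity $\phi_N=\E_0^{f,g}\!\big[\mathbbm{1}(\hat{X}_N=0)\,e^{-\mathcal{C}(I_{N+1},\rho_1)}\big]$; restricting this expectation to the event $\{\mathcal{C}(I_{N+1},\rho_1)\leq\chi\}$ and using $e^{-\mathcal{C}}\geq e^{-\chi}$ there gives $\phi_N\geq e^{-\chi}\big(\psi_N-\Py_0^g[\mathcal{C}(I_{N+1},\rho_1)>\chi]\big)$. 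Taking logarithms — with the convention $\log x\doteq-\infty$ whenever $x\leq0$, so that the bound is vacuously true once the argument becomes non-positive — reproduces inequality $(a)$, and then substituting $\psi_N\geq1-\epsilon_N$ and rewriting $1-\epsilon_N-\Py_0^g[\mathcal{C}(I_{N+1},\rho_1)>\chi]=\Py_0^g[\mathcal{C}(I_{N+1},\rho_1)\leq\chi]-\epsilon_N$ yields \eqref{strong3}. This step is exactly the strong converse theorem of \cite{polyanotes} applied to the induced one-shot problem.

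Next I would eliminate the residual dependence on $g$ hidden inside $\Py_0^g[\mathcal{C}(I_{N+1},\rho_1)\leq\chi]$. By Lemma~\ref{decom}, almost surely $\mathcal{C}(I_{N+1},\rho_1)=-D(\beta^{*}||\tilde{\rho}_{N+1})+\bar{Z}_N+D(\beta^{*}||\tilde{\rho}_1)$, and since a Kullback--Leibler divergence is non-negative this gives the pointwise bound $\mathcal{C}(I_{N+1},\rho_1)\leq\bar{Z}_N+D(\beta^{*}||\tilde{\rho}_1)$. Hence $\{\bar{Z}_N+D(\beta^{*}||\tilde{\rho}_1)\leq\chi\}\subseteq\{\mathcal{C}(I_{N+1},\rho_1)\leq\chi\}$, so $\Py_0^g[\mathcal{C}(I_{N+1},\rho_1)\leq\chi]\geq\Py_0^g[\bar{Z}_N+D(\beta^{*}||\tilde{\rho}_1)\leq\chi]$, which is \eqref{strong1}. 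Because $\chi-\log(\,\cdot\,-\epsilon_N)$ is non-increasing in the probability argument, I would then plug \eqref{strong1} into \eqref{strong3} to arrive at the statement of the lemma.

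The substantive content is imported from outside — the strong converse for binary testing in \cite{polyanotes} and the decomposition in Lemma~\ref{decom} — so I expect no real difficulty, only the need for care: making the one-shot reduction and the change-of-measure identity fully precise in the measure-theoretic setting (densities against $\Bbbk^N\times\nu^N$, the randomized inference rule $f$, and the null set where $Q_N^g=0$), and keeping the $\log x\doteq-\infty$ convention consistent so that every intermediate inequality — inequality $(a)$ in particular — remains valid even when its argument is non-positive.
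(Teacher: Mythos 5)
Your proposal is correct and follows essentially the same route as the paper: the one-shot reduction with densities $P_N^g$, $Q_N^g$, the strong converse applied to the log-likelihood ratio $\mathcal{C}(I_{N+1},\rho_1)$, the substitution $\psi_N \geq 1-\epsilon_N$, and then Lemma~\ref{decom} together with $D(\beta^{*}||\tilde{\rho}_{N+1})\geq 0$ to remove the strategy dependence. The only difference is cosmetic: you unpack the cited strong-converse step via the change-of-measure identity, which the paper simply imports from \cite{polyanotes}.
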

The bound \eqref{lb1} in Theorem \ref{mainthm} is obtained from Lemma \ref{strongconv} by assigning
\begin{align}
\chi =  \textsc{inv}_N\left(\epsilon_N + \frac{\epsilon_N}{\eta}\right),
\end{align}
where $\textsc{inv}_N$ is the quantile function of $\bar{Z}_N + D(\beta^{*}||\tilde{\rho}_1)$.
 The bound \eqref{lb2} in Corollary \ref{bess} is obtained by assigning
\begin{align*}
\chi^*  \doteq {ND^*} - {\sqrt{NV}}Q^{-1}\left(\epsilon_N+\frac{\epsilon_N}{\eta} + \frac{6T}{\sqrt{NV^3}}\right)+ D(\beta^{*}||\tilde{\rho}_1).
\end{align*}
Then using the Berry-Esseen theorem \cite{polyathesis}, we have
\begin{align}
\Py^g_0[\bar{Z}_N + D(\beta^{*}||\tilde{\rho}_1) \leq \chi^*] \geq \epsilon_N+\frac{\epsilon_N}{\eta}.
\end{align}
The result above combined with Lemma \ref{strongconv} leads to the strong converse bound in Corollary \ref{bess}.

\subsection{Strong Achievability}
\begin{lemma}\label{thresh}
Let $f$ be a deterministic inference strategy in which hypothesis $0$ is decided only if $\mathcal{C}(I_{N+1},\rho_1) \geq \theta$. Then $\phi_N \leq e^{-\theta}.$
\end{lemma}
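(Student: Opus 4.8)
The statement is a one-line change-of-measure argument once the confidence level is identified as a log-likelihood ratio, so the plan is essentially bookkeeping. First I would write the incorrect-verification probability as an integral against the density $Q_N^g$: since, conditioned on $X\neq 0$, the observed sequence $I_{N+1}$ has density $Q_N^g$ with respect to $\Bbbk^N\times\nu^N$, and since $f$ is deterministic (so $f_N(\iota_{N+1}:0)\in\{0,1\}$), we have
\begin{align*}
\phi_N = \int_{\mathcal{I}_{N+1}} f_N(\iota_{N+1}:0)\, Q_N^g(\iota_{N+1})\, d(\Bbbk^N\times\nu^N) = \int_{\mathcal{D}} Q_N^g(\iota_{N+1})\, d(\Bbbk^N\times\nu^N),
\end{align*}
where $\mathcal{D}\doteq\{\iota_{N+1}\in\mathcal{I}_{N+1}: f_N(\iota_{N+1}:0)=1\}$ is the region on which the strategy declares the system safe.

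Next I would use the hypothesis of the lemma: because $f$ decides hypothesis $0$ only when $\mathcal{C}(I_{N+1},\rho_1)\geq\theta$, the set $\mathcal{D}$ is contained in $\{\iota_{N+1}:\mathcal{C}(\iota_{N+1},\rho_1)\geq\theta\}$. On this latter set, recalling that $\mathcal{C}(\iota_{N+1},\rho_1)=\log\big(P_N^g(\iota_{N+1})/Q_N^g(\iota_{N+1})\big)$, the inequality $\mathcal{C}\geq\theta$ rearranges to the pointwise bound $Q_N^g(\iota_{N+1})\leq e^{-\theta}\,P_N^g(\iota_{N+1})$. Substituting this into the integral gives
\begin{align*}
\phi_N \leq \int_{\{\mathcal{C}\geq\theta\}} Q_N^g\, d(\Bbbk^N\times\nu^N) \leq e^{-\theta}\int_{\{\mathcal{C}\geq\theta\}} P_N^g\, d(\Bbbk^N\times\nu^N) \leq e^{-\theta}\int_{\mathcal{I}_{N+1}} P_N^g\, d(\Bbbk^N\times\nu^N) = e^{-\theta},
\end{align*}
since $P_N^g$ is a probability density. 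This is exactly the claimed bound.

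The argument has no real obstacle; the only points requiring a little care are measure-theoretic: one must note that $P_N^g$ and $Q_N^g$ are genuine densities of $I_{N+1}$ with respect to the common dominating measure $\Bbbk^N\times\nu^N$ (as established just before Lemma \ref{logsumexplemma} in the excerpt), that $\mathcal{C}$ equals the corresponding log-likelihood ratio with probability one, and that the set $\{Q_N^g=0\}$ contributes nothing to $\phi_N$ so the rearrangement $Q_N^g\le e^{-\theta}P_N^g$ is valid wherever it is needed. One should also remark that $Q_N^g$ already encodes the averaging over the unsafe hypotheses $j\in\mathcal{U}$ weighted by the normalized prior $\tilde\rho_1$, so $\phi_N=\Py^{f,g}[\hat X_N=0\mid X\neq 0]$ is precisely the integral written above.
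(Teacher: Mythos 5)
Your argument is correct and is precisely the standard change-of-measure bound for likelihood-ratio threshold tests that the paper itself invokes (it only cites \cite{polyanotes} and Appendix G of \cite{kartikarxiv} rather than writing it out). Writing $\phi_N$ as the integral of $Q_N^g$ over the decision region, using $\mathcal{C} = \log(P_N^g/Q_N^g) \geq \theta$ to get the pointwise bound $Q_N^g \leq e^{-\theta}P_N^g$, and integrating $P_N^g$ to at most one is exactly the intended proof, and your measure-theoretic caveats (the $\{Q_N^g = 0\}$ set, and $Q_N^g$ already encoding the $\tilde{\rho}_1$-average over unsafe hypotheses) are handled appropriately.
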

\begin{proof}
This follows from the standard arguments associated with log-likelihood ratios \cite{polyanotes}. A proof of this lemma is provided in Appendix G of \cite{kartikarxiv} for the case when the observation space $\mathcal{Y}$ is finite.
\end{proof}

Consider the following inference strategy: decide that the system is safe if $\mathcal{C}(I_{N+1},\rho_1) \geq \theta_N$ and decide that it is unsafe otherwise, where the threshold $\theta_N$ is given by
\begin{align}
\theta_N \doteq \textsc{inv}_N\left(\epsilon_N - \frac{\epsilon_N}{\eta}\right) - O\left(\log \frac{\eta}{\epsilon_N}\right).
\end{align}
Using Lemma \ref{thresh}, we can conclude that the inference strategy stated above (irrespective of which experiment selection strategy is used) achieves $\phi_N \leq \exp(-\theta_N)$. However, for a pair of experiment selection and inference strategies to be feasible, we also need to show that the constraint ($\psi_N \geq 1-\epsilon_N$) in Problem \eqref{opti} is satisfied. To do so, all we need to show is that our deterministic adaptive strategy, combined with the threshold based inference described above, satisfies
\begin{align}
\Py_0^g[\mathcal{C}(I_{N+1},\rho_1) < \theta_N] \leq \epsilon_N.
\end{align}
Based on the decomposition in Lemma \ref{decom} and a union bound, a sufficient criterion for the satisfying condition above is the following:
\begin{align}
\label{acond1}&\Py_0^g[-D(\beta^{*}||\tilde{\rho}_{n+1}) < \theta_{N,1}] \leq \epsilon_N /\eta\\
\label{acond2}&\Py_0^g[\bar{Z}_n + D(\beta^{*} || \tilde{\rho}_1) < \theta_{N,2}] \leq \epsilon_N - \epsilon_N/\eta,
\end{align}
where $\eta > 1$ and
\begin{align}
\theta_{N,1} &=- O\left(\log \frac{\eta}{\epsilon_N}\right)\\
\theta_{N,2} &= \textsc{inv}_N\left(\epsilon_N - \frac{\epsilon_N}{\eta}\right).
\end{align}
Notice that the condition \eqref{acond2} is trivially satisfied because of the definition of the $\theta_{N,2}$ and the quantile function $\textsc{inv}_N$. Therefore, we just need to show that condition \eqref{acond1} is satisfied. To do so, we will use a Chernoff bound based argument in the following manner.

\begin{remark}
The result \eqref{ub2} in Corollary \ref{bess} can be obtained by assigning
\begin{align}
&\theta_{N,2} \doteq \\
&{ND^*} - {\sqrt{NV}}Q^{-1}\left(\epsilon_N-\frac{\epsilon_N}{\eta} - \frac{6T}{\sqrt{NV^3}}\right) + D(\beta^{*} || \tilde{\rho}_1) .
\end{align}
Once again, using the Berry-Esseen theorem \cite{polyathesis}, 
\begin{align}
\Py_0^g[\bar{Z}_n + D(\beta^{*} || \tilde{\rho}_1) < \theta_{N,2}] \leq \epsilon_N - \epsilon_N/\eta.
\end{align}
\end{remark}

\paragraph*{Proof of Inequality \eqref{acond1}}
Define
\begin{align}
\zeta_n(j) \doteq Z_n(j) - \log \tilde{\rho}_1(j) - \bar{Z}_n - H(\beta^*,\tilde{\rho}_1).
\end{align}
In the homogeneous case, we have
\begin{align}
\zeta_n(j) - \zeta_{n-1}(j) &= \lambda_j(U_n,Y_n) - \frac{1}{M}\sum_{k \in \mathcal{U}}\lambda_j(U_n,Y_n)\\
&= \begin{cases}
\frac{M-1}{M}\log\frac{p_0(Y_n)}{p_1(Y_n)} &\text{if } U_n = j\\
\frac{1}{M}\log\frac{p_1(Y_n)}{p_0(Y_n)} &\text{if } U_n \neq j\
\end{cases}
\end{align}
For convenience, define
\begin{align}
m(s) &= \E_0\left[\exp\left(\frac{s(M-1)}{M}\log\frac{p_1(Y)}{p_0(Y)}\right)\right]\\
\bar{m}(s) &= \E_0\left[\exp\left(\frac{s}{M}\log\frac{p_0(Y)}{p_1(Y)}\right)\right].
\end{align}

Notice that the strategy DAS described earlier is equivalent to selecting the component $j$ with least $\zeta_n(j)$ at time $n+1$. Let
\begin{align}
\bar{j}_n \doteq \argmin_{j \in \mathcal{U}} \zeta_n(j).
\end{align}
We have
\begin{align}
 -H(\beta^{*},\tilde{\rho}_{n+1}) = -\log\sum_{j \in \mathcal{U}} \exp(-\zeta_n(j))
\end{align}
Therefore, for $0 \leq s \leq 1$,
\begin{align}
\nonumber\E^g_0\exp[sH(\beta^{*},\tilde{\rho}_{n+1})] &= \E^g_0\left(\sum_{j \in \mathcal{U}}\exp\left( -\zeta_n(j)\right)\right)^s\\
\nonumber &= \E^g_0\left(\sum_{j \in \mathcal{U}}(\exp\left(-s\zeta_n(j)\right))^{1/s}\right)^s\\
&\stackrel{a}{\leq}\E^g_0\left[\sum_{j\in \mathcal{U}}\exp\left(-s\zeta_n(j)\right)\right]\\
&= \sum_{j\in \mathcal{U}}\E^g_0\exp\left(-s\zeta_n(j)\right).\label{normineq2}
\end{align}
Inequality $(a)$ holds because $\| \cdot \|_{1/s} \leq \| \cdot \|_1.$
Further, we have
\begin{align}
&\sum_{j\in \mathcal{U}}\E^g_0\left[\exp\left(-s\zeta_{n+1}(j)\right)\right]\\
&\nonumber\stackrel{a}{=} \E^g_0\left[\sum_{j\in \mathcal{U}}\exp\left(-s\zeta_n(j)\right)\E^g_0[\exp(-s(\zeta_{n+1}(j) - \zeta_n)) \mid I_{n+1}]\right]\\
&= \E_0^g\left[\exp(-s\zeta_n(\bar{j}_n))m(s) + \sum_{j \neq \bar{j}_n} \exp(-s\zeta_n(j))\bar{m}(s)\right],\label{normineq3}
\end{align}
where $(a)$ follows from the tower property of conditional expectation and the fact that $\zeta_n(j)$ is measurable w.r.t. $I_{n+1}$. Let
\begin{align}
\varrho_n \doteq \frac{ \exp(-s\zeta_n(\bar{j}_n))}{\sum_{j\in \mathcal{U}}\exp(-s\zeta_n(j))} \geq \frac{1}{M}.
\end{align}
Then
\begin{align}
&\frac{ \exp(-s\zeta_n(\bar{j}_n))m(s) + \sum_{j \neq \bar{j}_n} \exp(-s\zeta_n(j))\bar{m}(s)}{\sum_{j\in \mathcal{U}}\exp(-s\zeta_n(j))}\\
&= \varrho_nm(s) + (1-\varrho_n)\bar{m}(s).
\end{align}

\begin{lemma}\label{achieve1}
For $0 \leq \delta < 1/(M-1)$, if $\varrho_n \leq \frac{1+\delta}{M}$ then
\begin{align}
 \exp(-s\zeta_n(\bar{j}_n)) = \max_{j \in \mathcal{U}} \exp(-s\zeta_n(j)) \leq \frac{1+\delta}{1+\delta -M \delta}.
\end{align}
\end{lemma}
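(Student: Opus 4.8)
The plan is to unwind the definitions and reduce Lemma~\ref{achieve1} to a purely algebraic inequality about a single nonnegative real number. Recall that
\[
\varrho_n = \frac{\exp(-s\zeta_n(\bar{j}_n))}{\sum_{j \in \mathcal{U}}\exp(-s\zeta_n(j))},
\]
and that $\bar{j}_n = \argmin_j \zeta_n(j)$, so $\exp(-s\zeta_n(\bar{j}_n)) = \max_{j\in\mathcal{U}}\exp(-s\zeta_n(j))$ since $s\ge 0$. Write $a \doteq \exp(-s\zeta_n(\bar{j}_n))$ for this maximum and $S \doteq \sum_{j\in\mathcal{U}}\exp(-s\zeta_n(j))$ for the full sum, so that $\varrho_n = a/S$. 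The other $M-1$ summands are each at most $a$, hence $S \le Ma$; more importantly, each is nonnegative, so $S = a + \sum_{j\ne\bar{j}_n}\exp(-s\zeta_n(j)) \ge a$. What I actually want is an upper bound on $a$, and the only other constraint available is the hypothesis $\varrho_n = a/S \le (1+\delta)/M$, i.e. $S \ge Ma/(1+\delta)$.

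First I would combine the two bounds on $S$: on one hand $S \le a + (M-1)a = Ma$ is too weak; instead I use that each of the $M-1$ non-maximal terms is at most $a$, giving $S \le a + (M-1)\,\tau$ where I should be more careful — actually the clean route is the reverse. Since $\sum_{j\ne\bar{j}_n}\exp(-s\zeta_n(j)) = S - a$ and each term is $\le a$, we get $S - a \le (M-1)a$. That is still the weak direction. The key realization is that we must bound $a$ from \emph{above} using a \emph{lower} bound on $S$ in terms of $a$ that does \emph{not} already involve $a$ — and that lower bound comes precisely from the total: normalize so that, say, the sum $S$ is pinned by the structure. The cleanest implementation: from $\varrho_n \le (1+\delta)/M$ we have $S \ge \frac{M}{1+\delta}a$, so $\sum_{j\ne\bar{j}_n}\exp(-s\zeta_n(j)) = S - a \ge \bigl(\frac{M}{1+\delta}-1\bigr)a = \frac{M-1-\delta}{1+\delta}a$. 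On the other hand each of these $M-1$ terms is at most $a$, so their sum is at most $(M-1)a$. For the bound to be meaningful I normalize: I would instead argue directly that $S \le (M-1) \cdot 1 + a$ is not available without a normalization of the $\zeta$'s. So the honest path is to observe that the statement is scale-invariant: replacing $\zeta_n(j)$ by $\zeta_n(j) + c$ multiplies $a$ and $S$ by the same factor and leaves $\varrho_n$ unchanged, so we may WLOG normalize $S = 1$. Then $a = \varrho_n \le (1+\delta)/M$... but that contradicts the claimed bound $\frac{1+\delta}{1+\delta-M\delta}$ unless I've mis-identified what is being bounded.

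Re-reading: the claim is $\exp(-s\zeta_n(\bar{j}_n)) \le \frac{1+\delta}{1+\delta-M\delta}$, an \emph{absolute} numeric bound, which can only make sense if there is an implicit normalization elsewhere in Appendix~\ref{compproof} fixing the scale of $\zeta_n$ — presumably $\sum_j \exp(-s\zeta_n(j)) = M\cdot(\text{something})$, or $\zeta_n$ is centered so that $\sum_j \exp(-\zeta_n(j))$ relates to $1$ at the relevant step, or (most likely) the quantity $\sum_{j}\exp(-s\zeta_n(j))$ has been shown to be $\le M$ in a preceding step not shown in this excerpt. So the plan is: (i) invoke that normalization $S \le M$ (or $=M$) established earlier; (ii) from $\varrho_n = a/S \le (1+\delta)/M$ together with the non-maximal terms summing to $S-a$, each being $\le a$, derive $S - a \le (M-1)a$ hence $S \le Ma$, i.e. $a \ge S/M$; combined with $a \le (1+\delta)S/M$ this pins $a \in [S/M, (1+\delta)S/M]$; (iii) now use $S \le M$ plus the lower bound on the non-maximal terms: $S - a \ge S - \frac{(1+\delta)S}{M} = \frac{S(M-1-\delta)}{M}$, and each of the $M-1$ terms is at most $a$, so $\frac{S(M-1-\delta)}{M} \le (M-1)a$, giving $a \ge \frac{S(M-1-\delta)}{M(M-1)}$ — still a lower bound, not what I want. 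I therefore invert: I want \emph{upper}, so I use that each non-maximal term is $\ge 0$ giving $S \ge a$, combined with $a \le (1+\delta)S/M$: substituting $S \le M$ directly yields $a \le (1+\delta)$, which is weaker than the claimed $\frac{1+\delta}{1+\delta-M\delta} > 1+\delta$...

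Given the sign confusion, the honest statement of the plan is this. \textbf{Approach.} Normalize using the scale-invariance of $\varrho_n$ so that $\sum_{j}\exp(-s\zeta_n(j)) = 1$; then $a \doteq \exp(-s\zeta_n(\bar j_n)) = \varrho_n$ directly, but the displayed bound must be read against whatever normalization Appendix~\ref{compproof} actually imposes on the total $\sum_j e^{-s\zeta_n(j)}$ just before this lemma is applied — call it $S_n$. Writing $a = \varrho_n S_n$ and using $\varrho_n \le (1+\delta)/M$ together with the two structural facts that (a) the $M-1$ non-maximal summands are each $\le a$ and sum to $S_n - a$, hence $S_n \le Ma$, i.e. $\varrho_n \ge 1/M$ (consistent with the stated $\varrho_n \ge 1/M$), and (b) those summands are each $\ge 0$, hence $S_n \ge a$ — one obtains the two-sided pinch $\frac{S_n}{M} \le a \le \frac{(1+\delta)S_n}{M}$ and, solving the linear relation $a = \varrho_n S_n$ jointly with $S_n - a = \sum_{j\ne\bar j_n}e^{-s\zeta_n(j)} \le (M-1)a$ under the hypothesis that forces equality in the worst case, the bound $a \le \frac{1+\delta}{1+\delta - M\delta}$ falls out after clearing denominators; the condition $\delta < 1/(M-1)$ is exactly what keeps $1+\delta - M\delta > 0$ so the bound is finite and positive.

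\textbf{Main obstacle.} The real work — and the step I expect to be delicate — is correctly identifying the implicit normalization carried over from the preceding computation (the quantity $S_n = \sum_j e^{-s\zeta_n(j)}$ and what upper bound on it, if any, has been propagated from the induction on $\sum_j \E_0^g e^{-s\zeta_n(j)}$ in \eqref{normineq2}--\eqref{normineq3}), because without it the inequality is not scale-invariant and the numeric constant $\frac{1+\delta}{1+\delta-M\delta}$ is ambiguous. Once that normalization is fixed, the remainder is the short algebraic manipulation sketched above: express everything in terms of $a$ and $S_n$, use $\varrho_n \le (1+\delta)/M$ and $\varrho_n \ge 1/M$ to bound $S_n$ linearly in $a$, substitute, and clear denominators, checking that $\delta < 1/(M-1)$ guarantees the denominator is positive.
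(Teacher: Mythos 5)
You correctly diagnose that the lemma cannot be scale-invariant and that some ``normalization'' must pin down the absolute size of $\exp(-s\zeta_n(\bar j_n))$, but you never find it, and you guess in the wrong place (the induction on $\sum_j \E_0^g e^{-s\zeta_n(j)}$ plays no role here). The missing idea is the deterministic, pathwise identity $\sum_{j\in\mathcal{U}}\zeta_n(j)=0$, which follows immediately from the definition $\zeta_n(j)=Z_n(j)-\log\tilde\rho_1(j)-\bar Z_n-H(\beta^*,\tilde\rho_1)$: in the homogeneous case $\beta^*$ is uniform, so $\bar Z_n=\frac1M\sum_j Z_n(j)$ and $H(\beta^*,\tilde\rho_1)=-\frac1M\sum_j\log\tilde\rho_1(j)$, i.e.\ $\zeta_n(\cdot)$ is the vector $Z_n(\cdot)-\log\tilde\rho_1(\cdot)$ recentred by its own average. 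Hence $\max_j\zeta_n(j)\ge 0$, so $\min_j\exp(-s\zeta_n(j))\le 1$ for $s\ge 0$. This is the normalization: not a bound on the whole sum $S$, but a bound on the \emph{smallest} summand. Writing $a=\exp(-s\zeta_n(\bar j_n))$, one gets
\begin{align*}
S \;=\; \sum_{j}\exp(-s\zeta_n(j)) \;\le\; (M-1)a + 1,
\end{align*}
and then the hypothesis $\varrho_n=a/S\le(1+\delta)/M$ gives $\frac{a}{(M-1)a+1}\le\frac{1+\delta}{M}$, which rearranges to $a\,(1+\delta-M\delta)\le 1+\delta$; the condition $\delta<1/(M-1)$ is exactly what makes $1+\delta-M\delta>0$ so one may divide. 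That is the entire proof.

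Two further points on your attempt. First, your fallback $S\le M$ is not established anywhere and is false in general (each $e^{-s\zeta_n(j)}$ can exceed $1$; only the minimum is controlled). Second, your sign bookkeeping slips near the end: if $S\le M$ did hold, the conclusion $a\le 1+\delta$ would be \emph{stronger} than the claimed $a\le\frac{1+\delta}{1+\delta-M\delta}$ (since the denominator is at most $1$), not weaker. As written, the proposal identifies the right shape of the argument (pinch $a$ between $S/M$ and $(1+\delta)S/M$ and clear denominators) but lacks the one fact that closes it, so it does not constitute a proof.
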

\begin{proof}
Consider the following facts:
\begin{enumerate}
\item  We have $\min_{j \in \mathcal{U}}\exp(-s\zeta_n(j)) \leq 1$. This is because $\sum_{j\in \mathcal{U}} \zeta_n(j) = 0$ and thus $\max_{j \in \mathcal{U}} \zeta_n(j) \geq 0$.
\item For every $j \in \mathcal{U}$, we have $$\exp(-s\zeta_n(j)) \leq  \exp(-s\zeta_n(\bar{j}_n)).$$ This simply follows from the definition of $\bar{j}_n$.
\end{enumerate}
Combining the two facts stated above, we have
$$\sum_{j\in \mathcal{U}}\exp(-s\zeta_n(j)) \leq (M-1)\exp(-s\zeta_n(\bar{j}_n)) + 1. $$
Therefore,
\begin{align}
 \frac{1+\delta}{M} &\geq \frac{ \exp(-s\zeta_n(\bar{j}_n))}{\sum_{j\in \mathcal{U}}\exp(-s\zeta_n(j))}\\
 & \geq \frac{\exp(-s\zeta_n(\bar{j}_n))}{(M-1)\exp(-s\zeta_n(\bar{j}_n)) + 1}\\
\implies  & \max_{j \in \mathcal{U}} \exp(-s\zeta_n(j)) \leq \frac{1+\delta}{1+\delta -M \delta}
\end{align}
This concludes the proof of the lemma.
\end{proof}

\begin{lemma}\label{achieve2}
There exist constants $0 < s^* < 1$ and $0 < \varsigma < 1$ such that if $\varrho_n > \frac{1+\delta}{M}$, then $\varrho_nm(s^*) + (1-\varrho_n)\bar{m}(s^*) < \varsigma.$
\end{lemma}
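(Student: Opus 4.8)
\emph{Plan.} The idea is to reduce the statement to the behaviour near $s=0$ of the two scalar functions $m(\cdot)$ and $\bar m(\cdot)$, using the fact that, for each fixed $s$, the quantity $\varrho_n m(s)+(1-\varrho_n)\bar m(s)$ is \emph{affine} in $\varrho_n$. Hence its supremum over the admissible range $\varrho_n\in(\tfrac{1+\delta}{M},1]$ is attained at (or approached at) an endpoint, namely
\[
\sup_{\varrho\in(\frac{1+\delta}{M},1]}\big[\varrho\, m(s)+(1-\varrho)\bar m(s)\big]=\max\{\,m(s),\, h_\delta(s)\,\},
\]
where $h_\delta(s)\doteq\tfrac{1+\delta}{M}m(s)+\tfrac{M-1-\delta}{M}\bar m(s)$. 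It therefore suffices to exhibit a single $s^*\in(0,1)$ for which both $m(s^*)<1$ and $h_\delta(s^*)<1$, and then take any $\varsigma$ strictly between $\max\{m(s^*),h_\delta(s^*)\}$ and $1$.

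First I would record the values and derivatives of $m$ and $\bar m$ at the origin. Writing $\Lambda\doteq\log\big(p_0(Y)/p_1(Y)\big)$ for $Y\sim p_0$, so that $\E_0[\Lambda]=D>0$, differentiating under the integral gives $m(0)=\bar m(0)=1$ and
\[
m'(0)=-\tfrac{M-1}{M}\,D<0,\qquad \bar m'(0)=\tfrac{1}{M}\,D>0 .
\]
Moreover $m(s)=\int p_0^{\,1-\frac{(M-1)s}{M}}\,p_1^{\,\frac{(M-1)s}{M}}\,d\nu\le 1$ for every $s\in[0,1]$ by H\"older's inequality, and the inequality is strict for $s\in(0,1]$ because $D>0$ rules out $p_0=p_1$; so the requirement $m(s^*)<1$ is free for any $s^*\in(0,1]$, and no moment generating function of $\Lambda$ is needed for this half.

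It remains to make $h_\delta(s^*)<1$. From the derivatives above, $h_\delta(0)=\tfrac{1+\delta}{M}+\tfrac{M-1-\delta}{M}=1$ and a one-line computation gives
\[
h_\delta'(0)=\tfrac{D}{M^2}\big[-(1+\delta)(M-1)+(M-1-\delta)\big]=-\tfrac{\delta D}{M}.
\]
This is the key point, and it is where the strict positivity of $\delta$ (i.e.\ $\varrho_n$ bounded \emph{away} from $1/M$) matters: when $\delta>0$ the convex combination leaves the value $1$ with strictly negative slope, whereas for $\delta=0$ one has $h_0'(0)=0$ with $h_0''(0)>0$ and the claim fails. A first-order Taylor expansion with a remainder uniform on a small interval $[0,s_1]$, say $h_\delta(s)\le 1-\tfrac{\delta D}{M}s+Cs^2$ there, then lets me pick $s^*\doteq\min\{s_1,\tfrac{\delta D}{2CM},1\}$, which forces $h_\delta(s^*)\le 1-\tfrac{\delta D}{2M}s^*<1$. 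Choosing $\varsigma\in\big(\max\{m(s^*),h_\delta(s^*)\},1\big)$ finishes the argument, and $s^*,\varsigma$ depend only on $M$, $\delta$ and the law of $\Lambda$ — not on $n$ or the strategy.

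The only genuinely delicate point is this Taylor estimate for $\bar m$: it presupposes that $\bar m(s)=\E_0[\exp(\tfrac{s}{M}\Lambda)]$ is finite (hence smooth, with $\bar m''(s)=\tfrac{1}{M^2}\E_0[\Lambda^2 e^{s\Lambda/M}]$ controlling the remainder uniformly on $[0,s_1]$) on a right-neighbourhood of $0$ — equivalently, that the log-likelihood ratio has a finite moment generating function near the origin. This is automatic under a boundedness hypothesis on the ratios (under Assumption \ref{subgauss} one simply has $\bar m(s)\le e^{sB/M}$, so $s_1$ and $C$ are explicit); in a general sub-exponential case one would truncate $\Lambda$ and absorb the truncation error. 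I expect this finiteness/uniformity bookkeeping, rather than any of the elementary algebra above, to be the real obstacle.
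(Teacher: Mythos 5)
Your proof is correct and follows essentially the same route as the paper's: a first-order Taylor expansion of $m$ and $\bar m$ at $s=0$, the observation that $\varrho\mapsto \varrho\,m(s)+(1-\varrho)\bar m(s)$ is affine so only the endpoint $\varrho=\tfrac{1+\delta}{M}$ (where the slope $-\tfrac{\delta D}{M}<0$ pushes the value strictly below $1$) needs to be controlled, and the choice of $\varsigma$ as that endpoint value at a suitably small $s^*$. Your treatment of the other endpoint via H\"older ($m(s)\le 1$ for $s\in(0,1]$) is a slightly more robust substitute for the paper's argument that the affine function is decreasing in $\varrho_n$ because $m(s)<\bar m(s)$ near $0$; both work. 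The one substantive point you add is the observation that the expansion of $\bar m(s)=\E_0[\exp(\tfrac{s}{M}\log\tfrac{p_0}{p_1})]$ presupposes finiteness of this moment generating function on a right-neighbourhood of $0$, which Assumption~\ref{boundedassump} (finite second moments) does not by itself guarantee; the paper's proof silently assumes this, and it is indeed automatic under Assumption~\ref{subgauss} but would require an extra hypothesis or a truncation argument in general.
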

\begin{proof}
Based on a first-order Taylor approximation at $s = 0$, we can conclude that
\begin{align}
m(s) &= 1 - s\frac{(M-1)D}{M} + o(s)\\
\bar{m}(s) &= 1 + s\frac{D}{M} + o(s).
\end{align}
Therefore, there exists a neighborhood of $s$ around 0 such that $m(s) < \bar{m}(s)$. Hence, in this neighborhood $\varrho_nm(s) + (1-\varrho_n)\bar{m}(s)$ is a decreasing function of $\varrho_n$ for fixed value of $s$. Further,
\begin{align}
\varrho_nm(s) + (1-\varrho_n)\bar{m}(s) = 1 + s\left(\frac{D}{M} -\varrho_n\right) + o(s). 
\end{align}
For $\varrho_n = \frac{1+\delta}{M}$, the RHS in the expression above is
\begin{align}
1 - s\left(\frac{D\delta}{M}\right) + o(s).
\end{align}
Therefore, there exists an $s^*$ such that
\begin{align}
\varsigma  \doteq \frac{1+\delta}{M}m(s^*) + \left(1-\frac{1+\delta}{M}\right)\bar{m}(s^*) < 1.
\end{align}
Hence, for every $\varrho_n > \frac{1+\delta}{M}$, $\varrho_nm(s^*) + (1-\varrho_n)\bar{m}(s^*) < \varsigma < 1.$ This concludes the proof of the lemma.
\end{proof}
Henceforth, the value of $s$ is assigned to be $s^*$ defined in the proof of Lemma \ref{achieve2}. Based on Lemmas \ref{achieve1} and \ref{achieve2}, we can consider the the following cases:
\begin{enumerate}
\item  $\varrho_n \leq \frac{1+\delta}{M}$ : In this case, we can conclude using Lemma \ref{achieve1} that
\begin{align}
\exp(-s\zeta_n(\bar{j}_n))m(s) + \sum_{j \neq \bar{j}_n} \exp(-s\zeta_n(j))\bar{m}(s)\\
 \leq \frac{(1+\delta)(m(s) + (M-1)\bar{m}(s))}{1+ \delta - M\delta} \doteq K.
\end{align}
\item $\varrho_n > \frac{1+\delta}{M}$: In this case, we can conclude using Lemma \ref{achieve2} that
\begin{align}
\exp(-s\zeta_n(\bar{j}_n))m(s) + \sum_{j \neq \bar{j}_n} \exp(-s\zeta_n(j))\bar{m}(s)\\
< \varsigma \left(\sum_{j\in \mathcal{U}}\exp(-s\zeta_n(j))\right).
\end{align}
\end{enumerate}
Therefore, we have
\begin{align}
\exp(-s\zeta_n(\bar{j}_n))m(s) + \sum_{j \neq \bar{j}_n} \exp(-s\zeta_n(j))\bar{m}(s)\\
< \max\{K, \varsigma \sum_{j\in \mathcal{U}}\exp(-s\zeta_n(j))\}.
\end{align}

This, using the result above and \eqref{normineq3}, we can conclude that
\begin{align}
\sum_{j\in \mathcal{U}}\E^g_0\exp\left(-s\zeta_{n+1}(j)\right)<K + \varsigma \sum_{j\in \mathcal{U}}\E_0^g\exp(-s\zeta_n(j)).\label{induct}
\end{align}

Using the result \eqref{induct} inductively and combining it with \eqref{normineq2}, we have
\begin{align}
\E^g_0\exp[sH(\beta^{*},\tilde{\rho}_{N+1})] &\leq  M\varsigma^{N} + \sum_{n=1}^{N} K\varsigma^{N-n}\\
& \leq M + \frac{K}{1-\varsigma} \doteq K'.\label{mgf1}
\end{align}

\paragraph*{Chernoff Bound}
We can use the Chernoff bound \cite{ross2014introduction} to conclude that
\begin{align}
&\Py_0^g[-D(\beta^{*}||\tilde{\rho}_{n+1}) < \theta_{N,1}]\\
&=\Py_0^g[\log M-H(\beta^{*},\tilde{\rho}_{N+1}) < \theta_{N,1}]\\
& \leq \exp(s(\theta_{N,1}-\log M))\E^g_0\exp[sH(\beta^{*},\tilde{\rho}_{N+1})] \\
&\stackrel{a}{\leq}  \exp(s(\theta_{N,1}-\log M))\left(M + \frac{K}{1-\varsigma}\right)\\
&\stackrel{b}{=} \epsilon_N/\eta,
\end{align}
where
\begin{align}
\theta_{N,1} \doteq \frac{1}{s^*}\log\frac{\epsilon_N}{\eta K'} + \log M =- O\left(\log \frac{\eta}{\epsilon_N}\right) .
\end{align}
This concludes our argument that the condition \eqref{acond1} is satisfied.

\begin{remark}
We would like to emphasize that the constants in the logarithmic term such as $\delta, s^*$ etc. need to be chosen appropriately and at this point it is not clear how one might determine these constants in general. It remains to be investigated how tight the bound obtained herein would be once these constants are obtained.
\end{remark}

\IEEEtriggeratref{300}

\end{document}